\newif\ifisFullVersion

\isFullVersiontrue

\ifisFullVersion
%TRUE = LONG
\documentclass[letter, 10pt]{article}
\newcommand{\abstractOnly}[1]{}
\newcommand{\fullOnly}[1]{#1}
\else 
%FALSE = ABSTRACT 
\documentclass[letter, 11pt]{article}
\newcommand{\abstractOnly}[1]{#1}
\newcommand{\fullOnly}[1]{}
\fi

\usepackage{amsmath,amssymb,amsthm}
\usepackage{stmaryrd,ulsy}
\usepackage{graphics}

\usepackage{fullpage}

%General:
\newcommand{\intvl}[2]{\ensuremath{\left\llbracket #1\,;\,#2 \right\rrbracket}}
\newcommand{\preput}[1]{\mbox{#1 }\quad}
\newcommand{\subprop}[1]{\preput{{\bf #1.}}}
\newcommand{\where}[0]{\preput{where}}
\newcommand{\wehave}[0]{. We have}
\newcommand{\style}[1]{#1}
\newcommand{\textgadget}[1]{\text{{\it #1}}}
\newcommand{\litp}[1]{p_{#1}}
\newcommand{\litn}[1]{n_{#1}}

%names
%level 0
\newcommand{\ident}[2]{{\mathcal I}_{#2}^{#1}}
\newcommand{\dock}[2]{\textgadget{Dock}(#1,#2)}
%level 1
\newcommand{\lock}[1]{\textgadget{Lock}(#1)}
\newcommand{\hook}[1]{\textgadget{Hook}(#1)}
\newcommand{\fork}[1]{\textgadget{Fork}(#1)}
%level 2
\newcommand{\literals}[2]{\textgadget{Literals}(#1,#2)}
\newcommand{\clause}[4]{\textgadget{Clause}(#1,#2,#3,#4)}
\newcommand{\var}[3]{\textgadget{Variable}(#1,#2,#3)}
%other
\newcommand{\key}[0]{\mbox{\it \style{key}}}
\newcommand{\test}[0]{\mbox{\it \style{test}}}
\newcommand{\take}[0]{\mbox{\it\style{take}}}
\newcommand{\puth}[0]{\mbox{\it\style{put}}}
\newcommand{\biglambda}[3]{\Lambda^{#2}_{#3}}
%big sequence S phi
\newcommand{\bigP}[0]{\ensuremath{\mathcal P}}
\newcommand{\SP}[1]{\ensuremath{S_{\phi}[#1]}}
\newcommand{\SPs}[2]{\ensuremath{S_{\phi}[#1,#2]}}
\newcommand{\sel}[0]{\ensuremath{\sigma}}

%Gadget names : 

\newtheorem{defn}{Definition}
\newtheorem{thm}{Theorem}
\newtheorem{property}[thm]{Property}

\newtheorem{lemma}[thm]{Lemma}

\newcommand{\botdead}{\bot}
%flip relation
\newcommand{\flip}{\rightarrow}
%no flip possible
\newcommand{\dead}{\flip\botdead}
%path
%\newcommand{\path}[1]{\rightarrow^*_{#1}}
%exhaustive path
\newcommand{\path}{\Longrightarrow}
\newcommand{\pathDead}{\path\emptyset}%\botdead}
\newcommand{\rp}[2]{%
\hspace{1em}\makebox[0em]{\raisebox{-0.6em}{%
{\scriptsize \ref{prop:#1}.#2}%
}}\hspace{-1em}}

%sequence separator (next element)
\newcommand{\n}{,\,}
%stack separator (next element)
\newcommand{\stacknext}{\\}
%flip here (stack)
\newcommand{\fh}[1]{\underline{\ #1\ }}
%flip possible
\newcommand{\fp}[1]{\underline{#1}}
%flip possible (stack)
\newcommand{\fpstack}[1]{\underline{\ #1\ }}
%row
\newcommand{\row}[1]{\ensuremath{\big< #1 \big>}}

%full stack 
\newcommand{\stack}[1]{{\def\n{\stacknext} \def\fp{\fpstack}  \begin{matrix}#1 \end{matrix}}}
\newcommand{\fstack}[1]{\boxed{\stack{#1}}}
%``foldable'' stack
\newcommand{\stackrow}[1]{\ifisFullVersion
\stack{#1} 
\else
\row{#1} 
\fi }
\newcommand{\fstackrow}[1]{\ifisFullVersion
\fstack{#1} 
\else
\row{#1} 
\fi }
%head gadget
\newcommand{\ch}[1]{{\def\style{\textbf} \boldsymbol {#1}}}
%name of the sequence

%head element

%double sequence
\newcommand{\doubleM}[2]{
\begin{array}{l}%
\raisebox{-1em}{$\nearrow$} #1\\
\\%
\raisebox{1em}{$\searrow$} #2
\end{array}
}

\newcommand{\doubleD}[2]{
\begin{array}{l}%
\phantom{\flip} #1\\
\nearrow\\%
\flip #2
\end{array}
}

%reversed sequence
\newcommand{\rev}[1]{ \raisebox{0.4em}{\footnotesize $\star$} #1}
%interval

%flip here
\newcommand{\revfh}[1]{ \fh{\rev{#1}}}

\title{Pancake Flipping Is Hard \abstractOnly{\\\medskip  \small \sc Extended Abstract}}
\author{Laurent Bulteau, Guillaume Fertin, Irena Rusu\smallskip\\
\small
Laboratoire d'Informatique de Nantes-Atlantique (LINA), UMR CNRS
  6241\\ \small
  Universit\'e de Nantes, 2 rue de la Houssini\`ere, 44322 Nantes 
  Cedex 3 - France \\ \small
\texttt{ \{Laurent.Bulteau, Guillaume.Fertin, Irena.Rusu\}@univ-nantes.fr}
}
\date{}

\begin{document}

\maketitle

\begin{center}\begin{minipage}{0.80\textwidth}

{\bf Abstract.}  
Pancake Flipping is the problem of sorting a stack of pancakes of different sizes (that is, a permutation), when the only allowed operation  is to insert a spatula anywhere in the stack and to flip the pancakes above it (that is, to perform a prefix reversal). In the burnt variant, one side of each pancake is marked as burnt, and it is required to finish with all pancakes having the burnt side down. Computing the optimal scenario for any stack of pancakes and determining the worst-case stack for any stack size have been challenges over more than three decades. Beyond being an intriguing combinatorial problem in itself, it also yields applications, e.g. in parallel computing and computational biology.

In this paper, we show that the Pancake Flipping problem, in its original (unburnt) variant, is \textsf{NP}-hard, thus answering the long-standing question of its computational complexity.

\vspace{2em}

{\bf Keywords.}  
Pancake problem, Permutations, Prefix reversals, Computational complexity.

\vspace{2em}
\end{minipage}
\end{center}

\clearpage
\section{Introduction}
The pancake problem was stated in~\cite{D75} as follows:

\begin{center}
 \begin{minipage}{0.85\textwidth}\small 
The chef in our place is sloppy, and when he prepares a stack of pancakes they come out all different sizes. Therefore, when I deliver them to a customer, on the way to the table I rearrange them (so that the smallest winds up on top, and so on, down to the largest at the bottom) by grabbing several from the top and flipping them over, repeating this (varying the number I flip) as many times as necessary. If there are $n$ pancakes, what is the maximum number of flips (as a function of $n$) that I will ever have to use to rearrange them?
\end{minipage}
\end{center}

Stacks of pancakes are represented by permutations, and a flip consists in reversing a prefix of any length. The previous puzzle yields two entangled problems:
\begin{itemize}                                                                                                                                  
\item Designing an algorithm that sorts any permutation with a minimum number of flips (this optimization problem is called MIN-SBPR, for Sorting By Prefix Reversals).
\item Computing $f(n)$, the maximum number of flips required to sort a permutation of size $n$ (the diameter of the so-called \emph{pancake network}).                                                                                                                            
\end{itemize}

Gates and Papadimitriou \cite{GP79} introduced the \emph{burnt} variant of the problem: the pancakes are two-sided, and an additional constraint requires the pancakes to end with the unburnt side up. The diameter of the corresponding \emph{burnt pancake network} is denoted $g(n)$.
 A number of studies~\cite{C09,C11,CB95,GP79,HS93,HS97,LC11} have aimed at determining more precisely the values of $f(n)$ and $g(n)$, with the following results:
\begin{itemize}
 \item $f(n)$ and $g(n)$ are known exactly for $n\leq 19$ and $n\leq 17$, respectively~\cite{C11}.
 \item $15n/14 \leq f(n) \leq 18n/11+O(1) $~\cite{HS97, C09}.
 \item $\lfloor (3n+3)/2 \rfloor \leq g(n) \leq 2n-6$~\cite{C11} (upper bound for $n\geq 16$).
\end{itemize}
Considering MIN-SBPR, 2-approximation algorithms have been designed, both for the burnt~\cite{CB95,FG05} and unburnt~\cite{FG05} variants. Moreover, Labarre and Cibulka~\cite{LC11} have characterized a subclass of permutations, called \emph{simple permutations}, that can be sorted in polynomial time.

The pancake problems have various applications. For instance, the pancake network, having both a small degree and diameter, is of interest in parallel computing. The algorithmic aspect, i.e. the sorting problem, has applications in comparative genomics, since prefix reversals are possible elementary modifications that can affect a genome during evolution. A related problem is Sorting By Reversals~\cite{BP93} where any subsequence can be flipped at any step, not only prefixes. This problem is now well-known, with a polynomial-time exact algorithm~\cite{HP95} for the signed case, and a 1.375-approximation~\cite{BHK02} for the \textsf{APX}-hard unsigned case~\cite{BK99}.

In this paper, we prove that the MIN-SBPR problem is \textsf{NP}-hard (in its unburnt variant), thus answering an open question raised several decades ago. We in fact prove a stronger result: it is known that the number of breakpoints of a permutation (that is, the number of pairs of consecutive elements that are not consecutive in the identity permutation) is a lower bound on the number of flips necessary to sort a permutation. We show that deciding whether this bound is tight is already \textsf{NP}-hard.

\section{Notations}
We denote by \intvl{a}{b} the interval $\{a,a+1,\ldots,b\}$ (for $b<a$, we have $\intvl ab=\emptyset$).
Let $n$ be an integer. Input sequences are permutations of \intvl1n, hence we consider only sequences where all elements are unsigned, and there cannot be duplicates.
\fullOnly{When there is no ambiguity, we use the same notation for a sequence and the set of elements it contains.}
We use upper case for \fullOnly{sets and} sequences, and lower case for elements.

Consider a sequence $S$ of length $n$, $S=\row{ x_1\n x_2\n \ldots\n x_n}$. Element $x_1$ is said to be the \textit{head element} of $S$. Sequence $S$ has a \textit{breakpoint} at position $r$, $1\leq r < n$ if $x_r\neq x_{r+1}-1$ and $x_r\neq x_{r+1}+1$. It has a \textit{breakpoint} at position $n$ if $x_n \neq n$. We write $d_b(S)$ the number of breakpoints of $S$. Note that having $x_1\neq 1$ does not directly count as a breakpoint, and that $d_b(S)\leq n$ for any sequence of length $n$. 
For any $p \leq q\in \mathbb N$, we write $\ident{p}{q}$ the sequence $\row{p\n p+1\n p+2\n\ldots\n q}$. $\ident{1}{n}$ is the \textit{identity}.
For a sequence of any length $S= \row{x_1\n x_2\n \ldots\n x_k}$, we write $\rev S$ the sequence obtained by reversing $S$: $\rev S=\row{x_k\n x_{k-1}\n \ldots\n x_1}$. Given an integer $p$, we write $p+S= \row{p+x_1\n p+x_2\n \ldots \n p+ x_k}$.

The \textit{flip} of length $r$ is the operation that consists in reversing the $r$ first elements of the sequence. It transforms 
\begin{equation*}
 S= \row{x_1\n x_2\n \ldots \n x_r\n x_{r+1}\n \ldots \n x_n}
\end{equation*} into 
\begin{equation*}
 S'=\row{x_r\n x_{r-1}\n \ldots \n x_1\n x_{r+1} \n\ldots \n x_n}.
\end{equation*}
Note that the flip of length $1$ does not modify $S$, and the flip of length $n$ transforms $S$ into $\rev S$. 

\begin{property}
%Given a flip transforming a sequence $S$ into a sequence $S'$, 
Given a sequence $S'$ obtained from a sequence $S$ by performing one flip,
we have $d_b(S')-d_b(S)\in \{-1,0,1\}$.
\end{property}

A flip from $S$ to $S'$ is said to be \textit{efficient} if $d_b(S')=d_b(S)-1$, and we reserve the notation $S\flip S'$ for such flips. A sequence of size $n$, different from the identity, is a \textit{deadlock} if it yields no efficient flip, and we write $S\dead$. 
By convention, we underline in a sequence the positions corresponding to possible efficient flips: there are at most two of them, and at least one if the sequence is neither a deadlock nor the identity.

We call \textit{path} a series of flips. 
A path is \textit{efficient} if each flip is efficient in the series. A sequence $S$ is \textit{efficiently sortable} if there exists an efficient path from $S$ to the identity permutation (equivalently, if it can be sorted in $d_b(S)$ flips). See for example Figure~\ref{fig:ex1}.

\begin{figure}
 \ifisFullVersion
\begin{equation*}
 \boxed{\stack{5\n 2\n 3\n \fp{1}\n \fp{4}}}
 \doubleM{\ \stack{1\n 3\n 2\n 5\n 4} \dead }{
 \ \  \stack{4\n \fp{1}\n 3\n \fp{2}\n 5}}%
\hspace{-1.6em}
\doubleD{\,\ \  \stack{1\n 4\n 3\n 2 \n 5}\dead }{
\; \stack{2\n \fp{3}\n 1\n 4\n 5}
\flip \stack{3\n 2\n \fp{1} \n 4\n 5}
\flip \stack{1\n 2\n 3 \n 4\n 5}  }
\qquad\qquad
 \boxed{\stack{5\n 2\n 3\n 4\n \fp{1}}}
\flip \stack{1\n 4 \n 3\n 2\n 5}\dead 
\end{equation*} 
\else

\begin{align*}
 \row{5\n 2\n 3\n \fp{1}\n \fp{4}}&
\begin{array}{l@{}l}
\raisebox{-0.5em}{$\nearrow$} \ 
\row{1\n 3\n 2\n 5\n 4} & \dead\ \\
\flip \row{4\n \fp{1}\n 3\n \fp{2}\n 5} &
\flip \row{2\n \fp{3}\n 1\n 4\n 5}
\flip \row{3\n 2\n \fp{1} \n 4\n 5}
\flip \row{1\n 2\n 3 \n 4\n 5} \\
& \raisebox{0.5em}{$\searrow$}\  \row{1\n 4\n 3\n 2 \n 5}\dead 
\end{array}
\\
 \row{5\n 2\n 3\n 4\n \fp{1}}&
\flip \row{1\n 4 \n 3\n 2\n 5}\dead 
\end{align*} 
\fi
\caption{\label{fig:ex1}Examples of efficient flips. Sequence $\row{5\n 2\n 3\n 1\n 4}$ is efficiently sortable (in four flips), but $\row{5\n 2\n 3\n 4\n 1}$ is not.}
\end{figure}

Let $S$ be a sequence different from the identity, and $\mathbb T$ be a set of sequences. We write $S\path \mathbb T$ if both following conditions are satisfied:
\begin{enumerate}
\item for each $T\in \mathbb T$, there exists an efficient path from $S$ to $T$.
\item for each efficient path from $S$ to the identity, there exists a sequence $T\in \mathbb T$ such that the path goes through $T$.                                                                                                            
\end{enumerate}
If $\mathbb T$ consists of a single element ($\mathbb T=\{T\}$), we may write $S\path T$ instead of $S\path \{T\}$.
Note that condition~1. is trivial if $\mathbb T=\emptyset$, and condition~2. is trivial if there is no efficient path from $S$ to $\ident 1n$.
%We write $S\pathDead$ if $S$ is not efficiently sortable.
%\fullOnly{
%If a relation $S\path \mathbb T$ is obtained by applying a property, we give the relevant reference below (e.g. $S \rp{dock}{}\path \mathbb T$ if it is obtained from Property~\ref{prop:dock}).}
Note that given a sequence $S$, there can be several different sets $\mathbb T$ such that $S\path \mathbb T$. %
\fullOnly{However, two are especially relevant:}%
\abstractOnly{

The following properties are easily deduced from the definition of $\path$.
}
\begin{property}Given any sequence $S\neq \ident 1n$,
\begin{eqnarray*}
 S\path \ident1n&\Leftrightarrow& S\text{ is efficiently sortable.}\\
 S\path \emptyset&\Leftrightarrow& S \text{ is not efficiently sortable.}
\end{eqnarray*}

\end{property}
\ifisFullVersion\begin{proof}
For $S\path \ident1n$: condition 1. is true iff there exists an efficient path from $S$ to the identity, that is $S$ is efficiently sortable. Condition 2. is always true.

For $S\path \emptyset$: condition 1. is always true. If there exists at least one efficient path from $S$ to $\ident 1n$, then, since there exists no sequence $T\in \emptyset$, Condition 2. cannot be true. Hence Condition 2. is false when there exists an efficient path from $S$ to the identity and true otherwise, so it is equivalent to the fact that $S$ is not efficiently sortable.
\end{proof}\fi

%\begin{property}
%If $S\path \mathbb T$, then $S$ is efficiently sortable if and only if at least one sequence in $\mathbb T$ is efficiently sortable.
%\end{property}
\fullOnly{The following property is easily deduced from the definition.}
\begin{property}
%If $S\path \{T_1,T\}$, and $T\path\{T_2,T_3\}$, then $S\path\{T_1,T_2,T_3\}$.
If $S\path \{S_1, S_2\}$,  $S_1\path \mathbb T_1$ and $S_2\path \mathbb T_2$, then $S\path \mathbb T_1 \cup  \mathbb T_2$.
\end{property}

\section{Reduction from 3-SAT}
The reduction uses a number of gadget sequences in order to simulate boolean variables and clauses with subsequences. They are organized in two levels (where level-1 gadgets are directly defined by sequences of integers, and level-2 gadgets are defined using a pattern of level-1 gadgets). For each gadget we define, we derive a property characterizing the efficient paths that can be followed if some part of the gadget appears at the head of a sequence. 
\abstractOnly{The proofs for all these properties follow the same pattern, with no obstacle appart from the increasing complexity of the sequences, and only the one for the Dock gadget is given in this extended abstract.}

\ifisFullVersion
We have not aimed at providing the smallest possible gadgets (the overall reduction for a formula containing $l$ variables and $k$ clauses creates a stack of $31l+98k$ elements with $16l+50k$ breakpoints), and we preferred straightforward proofs and easy-to-combine gadgets over short sequences. A rough analysis shows that the final stack size could easily be reduced to $22l+71k$, with the same number of breakpoints.
\fi

\subsection{Level-1 gadgets}
\subsubsection{Docks}

The dock gadget is the simplest we define. Its only goal is to store sequences of the kind $\rev\ident{p+1}{q}$ 
(%i.e., $\row{q\n q-1\n\ldots\n p+1}$, 
with $p<q$) out of the head of the sequence, without ``disturbing'' any other part.
\begin{defn} Given two integers $p$ and $q$ with $p<q$, the \emph{dock} for $\rev\ident{p+1}{q}$ is the sequence
\begin{eqnarray*}
\dock{p}{q}&=& D  \\
\where
 D&=&\row{ p-1 \n p\n q+1\n q+2}.
\end{eqnarray*}
\end{defn}

It has the following property:
\begin{property}\label{prop:dock} 
Let $p$ and $q$ be any integers with $p<q$, $D=\dock{p}{q}$, and $X$ and $Y$ be any sequences\wehave 
\begin{equation*}
\stackrow{\rev{\ident{p+1}{q}} \n X \n D \n  Y }
 \path 
\stackrow{X \n \ident{p-1}{q+2} \n  Y } 
\end{equation*}
\end{property}

\ifisFullVersion
\begin{proof}
 An efficient path from \row{\rev{\ident{p+1}{q}} \n X \n D \n  Y } to \row{X \n \ident{p-1}{q+2} \n  Y } is given in Figure~\ref{fig:proof_dock}. For each sequence in the path, we apply the only possible efficient flip, hence every efficient path between \row{\rev{\ident{p+1}{q}} \n X \n D \n  Y } and $\ident 1n$ (if such a path exists) begins with these two flips, and goes through \row{X \n \ident{p-1}{q+2} \n  Y }.
\end{proof}

\begin{figure}
\begin{equation*}
\boxed{ \stack{\rev{\ident{p+1}{q}} \n X \n D \n  Y }}
=
\stack{ q\n q-1\n \vdots \n p+2\n p+1 \n X \n p-1 \n \fh{p}\n q+1\n q+2 \n Y}
\flip
\stack{  p \n p-1\n \revfh{ X} \n p+1\n p+2\n \vdots \n q-1\n q\n q+1\n q+2 \n Y}
\flip
\stack{ X\n p-1 \n p \n p+1\n p+2\n \vdots \n q-1\n q\n q+1\n q+2 \n Y}
=
\boxed{ \stack{X \n \ident{p-1}{q+2} \n  Y } }
\end{equation*}
\caption{\label{fig:proof_dock}Proof of Property~\ref{prop:dock}. (Dock gadget)}
\end{figure}

\else
\begin{proof}
 An efficient path from \row{\rev{\ident{p+1}{q}} \n X \n D \n  Y } to \row{X \n \ident{p-1}{q+2} \n  Y } is given by:
\begin{align*}
\row{\rev{\ident{p+1}{q}} \n X \n D \n  Y }
&=
\row{ q\n q-1\n \ldots \n p+2\n p+1 \n X \n p-1 \n \fp{p}\n q+1\n q+2 \n Y}
\\&\flip
\row{  p \n p-1\n \fp{\rev X} \n p+1\n p+2\n \ldots \n q-1\n q\n q+1\n q+2 \n Y}
\\&\flip
\row{ X\n p-1 \n p \n p+1\n p+2\n \ldots \n q-1\n q\n q+1\n q+2 \n Y}
\\&=
\row{X \n \ident{p-1}{q+2} \n  Y } 
\end{align*}
For each sequence in the path, we apply the only possible efficient flip, hence every efficient path between \row{\rev{\ident{p+1}{q}} \n X \n D \n  Y } and $\ident 1n$ (if such a path exists) begins with these two flips, and goes through \row{X \n \ident{p-1}{q+2} \n  Y }.
\end{proof}

\fi

\subsubsection{Lock}
A lock gadget contains three parts: a sequence which is the lock itself, a key element that ``opens'' the lock, and a test element that checks whether the lock is open. 

\begin{defn} For any integer $p$, $\lock{p}$ is defined by
\begin{eqnarray*}
\lock{p} &=& (\key,\test,L) \\
\where
\key&=&p+10\\
\test&=&p+7\\
L&=& p+\row{1\n 2 \n 9 \n 8 \n 5 \n 6 \n 4 \n 3 \n 11 \n 12}\\
\end{eqnarray*}

Given a lock $ (\key,\test,L) = \lock{p} $, we write 
\begin{equation*}
L^o =  p+\row{1\n 2 \n 3 \n 4 \n 6 \n 5 \n 8 \n 9 \n 10 \n 11 \n 12} . 
\end{equation*}
\end{defn}

 Sequences $L$ and $L^o$ represent the lock when it is respectively closed or open.
If a sequence containing a closed lock has $\key$ for head element, then efficient flips put the lock in open position. 
If it has $\test$ for head element, then it is a deadlock if and only if the lock is closed.
\begin{property}\label{prop:lock}

Let $p$ be any integer, $ (\key,\test,L) = \lock{p} $, and $X$ and $Y$ be any sequences\wehave
\ifisFullVersion
\begin{equation*}
\subprop{a} 
\stack{\key\n X \n L \n Y} \path \stack{X \n L^o \n Y}
\qquad\qquad
\subprop{b}
 \stack{\test\n X \n L^o \n Y} \path \stack{X \n \ident{p+1}{p+12} \n Y}
\qquad \qquad
\subprop{c}
 \stack{\test\n X \n L \n Y}\dead
\end{equation*}
\else
\begin{align*}
\subprop{a} &
\row{\key\n X \n L \n Y} \path \row{X \n L^o \n Y}
\\
\subprop{b}&
 \row{\test\n X \n L^o \n Y} \path \row{X \n \ident{p+1}{p+12} \n Y}
\\
\subprop{c}&
 \row{\test\n X \n L \n Y}\dead
\end{align*}
\fi
\end{property}

\ifisFullVersion\begin{proof}
 See Figure~\ref{fig:proof_lock}. Note that for readability reasons, the proof is given for $p=0$. It can obviously be extended to any value of $p$ (each element would then be increased by $p$). 
\end{proof}
\begin{figure}
\begin{equation*}
\subprop{a} 
\fstack{\key\n X \n L \n Y}
=
\stack{10 \n X \n 1\n \fh{2} \n 9 \n 8 \n 5 \n 6 \n 4 \n \fh{3} \n 11 \n 12 \n Y}
\doubleM{
\stack{2\n 1\n \rev X \n 10 \n 9 \n 8 \n 5 \n 6 \n 4 \n 3 \n 11 \n 12 \n Y}
\dead
}{
\stack{3\n 4\n 6\n 5\n 8\n \fh{9}\n 2\n 1\n \rev X \n 10 \n 11 \n 12 \n Y}
\flip
\stack{9\n 8\n 5\n 6\n 4\n  3 \n 2\n 1\n \revfh X \n 10 \n 11 \n 12 \n Y}
\flip
\stack{X\n 1\n 2\n 3\n 4\n 6\n 5\n 8\n 9\n 10 \n 11 \n 12 \n Y}
=
\fstack{X \n L^o \n Y}
}
\qquad
\subprop{b} 
\fstack{\test\n X \n L^o \n Y}
=
\stack{7 \n X\n 1\n 2\n 3\n \fh4\n 6\n \fh5\n 8\n 9\n 10 \n 11 \n 12 \n Y}
\doubleM{
\stack{4 \n 3\n 2\n 1\n \rev X\n 7 \n 6\n 5\n 8\n 9\n 10 \n 11 \n 12 \n Y}
\dead
}{
\stack{5\n \fh6\n 4 \n 3\n 2\n 1\n \rev X\n 7 \n 8\n 9\n 10 \n 11 \n 12 \n Y}
\flip
\stack{6\n 5\n 4 \n 3\n 2\n 1\n \revfh X\n 7 \n 8\n 9\n 10 \n 11 \n 12 \n Y}
\flip
\stack{X\n 1\n 2\n 3\n 4\n 5\n 6\n 7\n 8\n 9 \n10 \n 11 \n 12 \n Y}
=
\fstack{X \n \ident{1}{12} \n Y}
}
\end{equation*}
\begin{equation*}
\subprop{c} 
\fstack{\test\n X \n L \n Y}
=
\stack{7 \n X \n 1\n 2 \n 9 \n 8 \n 5 \n 6 \n 4 \n 3 \n 11 \n 12 \n Y}
\dead
\end{equation*}
\caption{\label{fig:proof_lock}Proof of Property~\ref{prop:lock}. (Lock gadget)}
\end{figure}\fi
We use locks to emulate literals of a boolean formula: variables ``hold the keys'', and  in a first time open the locks corresponding to true literals. Each clause holds three test elements, corresponding to its three literals, and the clause is true if the lock is open for at least one of the test elements.
\subsubsection{Hook}

A hook gadget contains four parts: two sequences used as delimiters, a \textit{take} element that takes the interval between the delimiters and places it in head, and a \textit{put} element that does the reverse operation. Thus, the sequence between the delimiters can be stored anywhere until it is called by $\take$, and then can be stored back using $\puth$. 
\begin{defn}\label{def:hook}
 For any integer $p$, $\hook{p}$ is defined by
\begin{eqnarray*}
\hook{p}&=&(\take, \puth, G, H) \\
\where
\take &=& p+10 \\
\puth &=& p+7 \\
G &=& p+ \row{3\n 4} \\
H&=& p+ \row{12\n 11\n 6\n 5\n 9\n 8\n 2\n 1}.
\end{eqnarray*}

Given a hook $(\take, \puth, G, H)=\hook{p} $, we write
\begin{eqnarray*}
G' &=& p+ \row{12\n 11\n 6\n 5\n 4\n 3} \\
H' &=& p+ \row{10\n 9\n 8\n 2\n 1} \\
G'' &=& p+ \row{3\n 4\n 5\n 6\n 7} \\
H'' &=& p+ \row{12\n 11\n 10\n 9\n 8\n 2\n 1}.
\end{eqnarray*} 
\end{defn}

\begin{property}\label{prop:hook}
Let $p$ be an integer,  $(\take, \puth, G, H)=\hook{p} $, and  $X$, $Y$ and $Z$ be any sequences\wehave
\ifisFullVersion
\begin{equation*}
\subprop a
\stack{\take \n X \n G \n Y \n H\n Z} \path \stack{Y \n G'\n \rev X \n H' \n Z}
\qquad\qquad
\subprop b
 \stack{\puth \n X \n G'\n \rev Y \n H' \n Z} \path \stack{Y \n G''\n X \n H'' \n Z}
\qquad\qquad
\subprop c
 \stack{G''\n X \n H'' \n Y} \path \stack{X \n \rev\ident{p+1}{p+12} \n Y}
\end{equation*}
\else
\begin{align*}
\subprop a&
\row{\take \n X \n G \n Y \n H\n Z}   \path \row{Y \n G'\n \rev X \n H' \n Z}
\\
\subprop b&
 \row{\puth \n X \n G'\n \rev Y \n H' \n Z}  \path \row{Y \n G''\n X \n H'' \n Z}
\\
\subprop c&
 \row{G''\n X \n H'' \n Y}   \path \row{X \n \rev\ident{p+1}{p+12} \n Y}
\end{align*}
\fi
\end{property}
\ifisFullVersion\begin{proof}
 See Figure~\ref{fig:proof_hook} (with $p=0$).
\end{proof}\fi
\ifisFullVersion\begin{figure}
 \begin{equation*}
\subprop{a} 
 \fstack{\take \n X \n G \n Y \n H\n Z}
=
  \stack{10 \n X \n 3\n 4 \n Y \n 12\n 11\n 6\n \fh{5}\n 9\n 8\n 2\n 1\n Z}
\flip
  \stack{5 \n 6 \n 11 \n 12 \n \revfh Y \n 4 \n 3 \n \rev X \n 10 \n 9\n 8\n 2\n 1\n Z}
\flip
  \stack{Y \n 12\n 11\n 6\n 5\n 4 \n 3 \n \rev X \n 10 \n 9\n 8\n 2\n 1\n Z}
=
 \fstack{Y \n  G' \n \rev X \n H'\n Z}%
 \end{equation*}
 \begin{equation*}
\subprop{b} 
 \fstack{\puth \n X \n G'\n \rev Y \n H' \n Z}
=
\stack{7 \n X \n 12\n \fh{11}\n 6\n 5\n 4 \n 3 \n \rev Y \n 10 \n 9\n 8\n 2\n 1\n Z}
\flip
\stack{11 \n 12  \n \rev X  \n 7 \n 6\n 5\n 4 \n 3 \n \revfh Y \n 10 \n 9\n 8\n 2\n 1\n Z}
\flip
\stack{ Y \n 3 \n 4 \n 5 \n 6 \n 7 \n X \n 12 \n 11 \n 10 \n 9\n 8\n 2\n 1\n Z}
=
 \fstack{Y \n G''\n X \n H'' \n Z}
\qquad
\subprop c
 \fstack{G''\n X \n H'' \n Y} 
=
\stack{ 3 \n 4 \n 5 \n 6 \n 7 \n X \n 12 \n 11 \n 10 \n 9\n \fh 8\n 2\n 1\n Y}
\flip
\stack{8 \n 9 \n 10 \n 11 \n 12 \n \revfh X \n 7 \n 6 \n 5 \n 4 \n 3 \n 2 \n 1 \n Y}
\flip
\stack{X \n 12 \n 11 \n 10 \n 9\n 8\n 7 \n 6 \n 5 \n 4 \n 3\n 2\n 1\n Y}
=
 \fstack{X \n \rev\ident{1}{12} \n Y}
 \end{equation*}
\caption{\label{fig:proof_hook}Proof of Property~\ref{prop:hook}. (Hook Gadget)}
\end{figure}\fi

\subsubsection{Fork}

A fork gadget implements choices. It contains two parts delimiting a sequence $X$. Any efficient path encountering a fork gadget follows one of two tracks, where either $X$ or $\rev X$ appears at the head of the sequence at some point. Sequence $X$ would typically contain a series of triggers for various gadgets ($\key$, $\take$, etc.), so that $X$ and $\rev X$ differ in the order in which the gadgets are triggered.

\begin{defn}\label{def:fork}
 For any integer $p$, $\fork{p}$ is defined by
\begin{eqnarray*}
\fork{p}&=&(E, F) \\
\where
E &=& p+ \row{11\n 8\n 7\n 3} \\
F &=& p+ \row{10\n 9\n 6\n 12\n 13\n 4\n 5\n 15\n 14\n 2\n 1}.
\end{eqnarray*}

Given a fork $(E,F)=\fork{p} $, we write
\begin{eqnarray*}
F^1 &=& p+ \row{10\n 9\n 6\n 7\n 8\n 11\n 12\n 13\n 14\n 15\n 5\n 4\n 3\n 2\n 1} \\
F^2 &=& p+ \row{3\n 7\n 8\n 11\n 10\n 9\n 6\n 12\n 13\n 4\n 5\n 15\n 14\n 2\n 1} \\
\end{eqnarray*} 
\end{defn}

\begin{property}\label{prop:fork}
Let $p$ be an integer,  $(E,F)=\fork{p} $, and  $X$, $Y$ be any sequences\wehave
\ifisFullVersion
\begin{equation*}
\subprop a
 \stack{E\n X\n F \n Y} \path \left\{ \stack{X \n  F^1 \n Y}\ ,\   \stack{\rev X \n  F^2 \n Y} \right\}
\qquad\qquad
\subprop b
\stack{F^1 \n Y} \path \stack{\rev \ident{p+1}{p+15} \n Y}
\qquad\qquad
\subprop c
\stack{F^2 \n Y} \path \stack{\rev \ident{p+1}{p+15} \n Y}
\end{equation*}
\else
\begin{align*}
\subprop a&
 \row{E\n X\n F \n Y}  \path
       \left\{       \row{X \n  F^1 \n Y} , \   \row{\rev X \n  F^2 \n Y}
        \right\}
\\
\subprop b&
\row{F^1 \n Y} \path \row{\rev \ident{p+1}{p+15} \n Y}
\\
\subprop c&
\row{F^2 \n Y} \path \row{\rev \ident{p+1}{p+15} \n Y}
\end{align*}
\fi
\end{property}

\ifisFullVersion\begin{proof}
 See Figures~\ref{fig:proof_fork} and~\ref{fig:proof_forkb} (with $p=0$).
\end{proof}\fi
\ifisFullVersion\begin{figure}
 \begin{equation*}
\subprop{a} 
 \fstack{E\n X\n F \n Y} 
=
\stack{11\n 8\n 7\n 3 \n \fh{X}\n 10\n 9\n \fh{6}\n 12\n 13\n 4\n 5\n 15\n 14\n 2\n 1 \n Y} 
\doubleM{
\stack{\rev X \n 3\n 7\n 8\n 11\n 10\n 9\n 6\n 12\n 13\n 4\n 5\n 15\n 14\n 2\n 1 \n Y} 
=
\fstack{\rev X \n  F^2 \n Y}
}{
\ 
\stack{6\n 9\n 10\n \rev X \n \fh{3}\n 7\n 8\n 11\n 12\n 13\n 4\n 5\n 15\n 14\n 2\n 1 \n Y}
\  
\flip
\ 
\stack{3 \n X\n 10\n 9\n 6\n 7\n 8\n 11\n 12\n \fh{13}\n 4\n 5\n 15\n \fh{14}\n 2\n 1 \n Y} 
}
\doubleD{
\stack{13\n 12\n 11\n 8\n 7\n 6\n 9\n 10\n \rev X\n 3\n 4\n 5\n 15\n 14\n 2\n 1 \n Y} 
\dead
}{
\stack{14\n 15\n 5\n \fh{4}\n 13\n 12\n 11\n 8\n 7\n 6\n 9\n 10\n \rev X\n 3\n 2\n 1 \n Y} 
\flip 
\stack{4\n 5\n 15\n 14\n 13\n 12\n 11\n 8\n 7\n 6\n 9\n 10\n \revfh X\n 3\n 2\n 1 \n Y} 
\flip 
\stack{X\n 10\n 9\n 6\n 7\n 8\n 11\n 12\n 13\n 14\n 15\n 5\n 4 \n 3\n 2\n 1 \n Y} 
=
\fstack{X \n  F^1 \n Y}
}\subprop{b} 
\fstack{ F^1 \n Y}
=
\stack{ 10\n 9\n 6\n 7\n \fh8 \n 11\n 12\n 13\n 14\n 15\n 5\n 4 \n 3\n 2\n 1 \n Y} 
\flip
\stack{ 8\n 7\n \fh6\n 9\n 10\n 11\n 12\n 13\n 14\n 15\n 5\n 4 \n 3\n 2\n 1 \n Y} 
\flip
\stack{ 6\n 7\n 8\n 9\n 10\n 11\n 12\n 13\n 14\n \fh{15}\n 5\n 4 \n 3\n 2\n 1 \n Y} 
\flip
\stack{ 15\n 14\n 13\n 12\n 11\n 10\n 9\n 8\n 7\n 6\n 5\n 4 \n 3\n 2\n 1 \n Y} 
=
\fstack{\rev\ident{1}{15} \n Y}
 \end{equation*}
\caption{\label{fig:proof_fork}Proof of Properties~\ref{prop:fork}.a and \ref{prop:fork}.b (Fork gadget).}

\end{figure}\fi
\ifisFullVersion\begin{figure}
 \begin{equation*}
\subprop{c} 
\fstack{ F^2 \n Y}
=
\stack{3\n 7\n 8\n 11\n 10\n 9\n 6\n 12\n \fh{13}\n 4\n 5\n 15\n \fh{14}\n 2\n 1 \n Y} 
\doubleM{
\stack{13 \n 12\n 6\n 9\n 10\n 11\n 8\n 7\n 3\n 4\n 5\n 15\n 14\n 2\n 1 \n Y} 
\dead 
}{
\stack{14\n 15\n 5\n \fh{4}\n 13\n 12\n 6\n 9\n 10\n 11\n 8\n 7\n 3\n 2\n 1 \n Y} 
\flip
\stack{4\n 5\n 15\n 14\n 13\n 12\n 6\n 9\n 10\n 11\n 8\n \fh{7}\n 3\n 2\n 1 \n Y} 
\flip
\stack{7\n 8\n 11\n 10\n \fh{9}\n 6\n 12\n 13\n 14\n 15\n 5\n 4 \n 3\n 2\n 1 \n Y} 
\flip
\stack{9\n 10\n \fh{11}\n 8\n 7\n 6\n 12\n 13\n 14\n 15\n 5\n 4 \n 3\n 2\n 1 \n Y} 
\flip
\stack{11\n 10\n 9\n 8\n 7\n \fh{6}\n 12\n 13\n 14\n 15\n 5\n 4 \n 3\n 2\n 1 \n Y} 
\flip
\stack{6\n 7\n 8\n 9\n 10\n 11\n 12\n 13\n 14\n \fh{15}\n 5\n 4 \n 3\n 2\n 1 \n Y} 
\flip
\stack{ 15\n 14\n 13\n 12\n 11\n 10\n 9\n 8\n 7\n 6\n 5\n 4 \n 3\n 2\n 1 \n Y} 
=
\fstack{\rev\ident{1}{15} \n Y}
}
 \end{equation*}
\caption{\label{fig:proof_forkb}Proof of Property~\ref{prop:fork}.c (Fork gadget).}
\end{figure}\fi

\subsection{Level-2 gadgets}

\ifisFullVersion
In this section, we define new gadgets based on the four level-1 gadgets. From now on, each property proof uses exclusively properties from smaller gadgets. In order to help the reader follow the ever-present references, we use the following notations.
Bold font is used to emphasise the ``active'' parts of the gadget currently having an element at the head of the sequence. 
For each relation $S\path  T$, we give the relevant reference below (e.g. $S \rp{dock}{}\path  T$ if it is obtained from Property~\ref{prop:dock}).
Finally, a summary of all gadget properties (either level-1 or -2) is given in Figure~\ref{fig:compilprop}.

\begin{figure}

%\begin{minipage}{0.43\textwidth}
 
\begin{align*}
\text{\bf Dock gadget}&\\
\row{\ch{\rev{\ident{p+1}{q}}} \n X \n \ch{D} \n  Y }  &\rp{dock}{} \path \row{X \n \ident{p-1}{q+2} \n  Y } \\
\text{\bf Lock gadget}&\\
\row{\ch{\key}\n X \n \ch{L} \n Y} & \rp{lock}{a}\path \row{X \n L^o \n Y} \\
 \row{\ch{\test}\n X \n \ch{L^o} \n Y} & \rp{lock}{b}\path \row{X \n \ident{p+1}{p+12} \n Y} \\
 \row{\ch{\test}\n X \n\ch{ L} \n Y}&\rp{lock}{c}\ \dead\\
\text{\bf Hook gadget}&\\
\row{\ch{\take }\n X \n\ch{ G} \n Y \n \ch{H}\n Z} &\rp{hook}{a}\path \row{Y \n G'\n \rev X \n H' \n Z}\\
 \row{\ch{\puth }\n X \n \ch{G'}\n \rev Y \n \ch{H'} \n Z} &\rp{hook}{b}\path \row{Y \n G''\n X \n H'' \n Z}\\
 \row{\ch{G''}\n X \n \ch{H''} \n Y}& \rp{hook}{c}\path \row{X \n \rev\ident{p+1}{p+12} \n Y}\\
\text{\bf Fork gadget}&\\
\row{\ch{E}\n X\n\ch{ F} \n Y} & \rp{fork}{a}\path \left\{ 
\begin{matrix} \row{X \n  F^1 \n Y}\phantom{\rev{}} \\ \row{\rev X \n  F^2 \n Y} \end{matrix}
\right\}\\
\row{\ch{F^1} \n Y} &\rp{fork}{b} \path \row{\rev \ident{p+1}{p+15} \n Y}\\
\row{\ch{F^2} \n Y} & \rp{fork}{c}\path \row{\rev \ident{p+1}{p+15} \n Y}
\\%\end{align*} \end{minipage}\begin{minipage}{0.56\textwidth}\begin{align*}
\text{\bf Literals gadget}&\\
\forall i\notin O\cup I,\    \row{\ch{\key_i}\n X \n \ch{\biglambda{C}{O}{I}} } 
&\rp{literals}{a}\path
 \row{X \n \biglambda{C-\{i\}}{O\cup\{i\}}{I}  }\\
  \forall i\in O,\   \row{\ch{\test_i}\n X \n \ch{\biglambda{C}{O}{I}}} 
&\rp{literals}{b}\path
 \row{X \n \biglambda{C}{O-\{i\}}{I\cup\{i\}}  }\\
 \forall i\notin O ,\   \row{\ch{\test_i}\n X \n \ch{\biglambda{C}{O}{I}} } 
&\rp{literals}{c}\ \dead
\\%\end{align*} \begin{align*}
\text{\bf Variable gadget}&\\
\row{\ch{\nu}\n X\n \ch{V}\n Y\n \ch{\biglambda{C}{O}{I}}} 
&\rp{variable}{a}\path
\left\{\begin{matrix}
  \row{ X\n V^1 \n Y\n \biglambda{C-P}{O\cup P}{I}}\\
  \row{ X\n V^2 \n Y\n \biglambda{C-N}{O\cup N}{I}}
\end{matrix}
\right\}\\
\row{\ch{V^1}\n X\n \ch{D} \n Y \n \ch{\biglambda{C}{O}{I}}} 
&\rp{variable}{b}\path
\row{X\n \ident{p+1}{p+31} \n Y \n \biglambda{C-N}{O\cup N}{I}} \\
\row{\ch{V^2}\n X\n \ch{D} \n Y \n \ch{\biglambda{C}{O}{I}} }
&\rp{variable}{c}\path
\row{X\n \ident{p+1}{p+31} \n Y \n \biglambda{C-P}{O\cup P}{I}} 
\\
\text{\bf Clause gadget}&\\
\row{ \ch{\gamma} \n X \n \ch{\Gamma} \n Y  \n\ch{\biglambda{C}{O}{I} }} &\rp{clause}{} \path\left\{
\begin{matrix}
\row{  X \n \Gamma^1\n Y   \n \biglambda{C}{O-\{a\}}{I\cup\{a\}} }  \mbox{ iff } a\in O
\\
\row{  X \n \Gamma^2\n Y  \n \biglambda{C}{O-\{b\}}{I\cup\{b\}} }  \mbox{ iff } b\in O
\\
\row{ X \n \Gamma^3\n Y   \n \biglambda{C}{O-\{c\}}{I\cup\{c\}} } \mbox{ iff } c\in O
\end{matrix}\right\}
\\
 \row{\ch{ \Gamma^1}\n Y\n
 \Delta\n Z\n \ch{\biglambda{C}{O}{I}} } 
&\rp{clause2}{a} \path 
\row{Y \n \ident{p+1}{p+62}\n Z\n \biglambda{C}{O-\{b,c\}}{I\cup\{b,c\}} } \\
\row{ \ch{\Gamma^2}\n Y\n
 \Delta\n Z\n \ch{\biglambda{C}{O}{I}} } 
&\rp{clause2}{b}\path 
\row{Y \n \ident{p+1}{p+62}\n Z\n \biglambda{C}{O-\{a,c\}}{I\cup\{a,c\}} } \\
 \row{ \ch{\Gamma^3}\n Y\n
 \Delta\n Z\n\ch{ \biglambda{C}{O}{I}} } 
&\rp{clause2}{c} \path 
\row{Y \n \ident{p+1}{p+62}\n Z\n \biglambda{C}{O-\{a,b\}}{I\cup\{a,b\}} } 
\end{align*} 
%\end{minipage}

\caption{\label{fig:compilprop}Compilation of all gadget properties. As a general rule, $X$, $Y$, $Z$ can be any sequences, $O$ and $I$ any disjoint subsets of $\intvl 1m$. See respective definitions and properties for specific constraints and notations}
 \end{figure}\fi
\subsubsection{Literals}
The following gadget is used only once in the reduction. It contains the locks corresponding to all literals of the formula. 

\begin{defn}
Let $p$ and $m$ be two integers, $\literals pm$ is defined by
\begin{eqnarray*}
\literals{p}{m}&=&(\key_1,\ldots,\key_m,\test_1,\ldots,\test_m, \Lambda) \\
\where
\Lambda&=&\row{L_1\n L_2\n \ldots \n L_m}\\
\forall i\in\intvl1m,\ (\key_i, \test_i, L_i) & = &\lock{p+12(i-1)}
\end{eqnarray*}

Let  $O$ and $I$ be two disjoint subsets of $\intvl1m$. We write $\biglambda{C}{O}{I}$ the sequence obtained from $\Lambda$ by
\ifisFullVersion
 \begin{itemize}
\item replacing $L_i$ by $L^o_i$ for all $i\in O$,
\item replacing $L_i$ by $\ident{p+12i-11}{p+12i}$ for all $i\in I$. 
\end{itemize}
\else
replacing $L_i$ by $L^o_i$ for all $i\in O$ and by $\ident{p+12i-11}{p+12i}$ for all $i\in I$. 
\fi

\end{defn}

Elements of $O$ correspond to open locks in $\biglambda COI$, while elements of $I$ correspond to open locks which have moreover been tested.
Note that $\biglambda{\intvl1m}{\emptyset}{\emptyset}=\Lambda$, and that $\biglambda{\emptyset}{\emptyset}{\intvl1m}=\ident{p+1}{p+12m}$.

\begin{property}\label{prop:literals}
Let $p$ and $m$ be two integers, $(\key_1,\ldots,\key_m,\test_1,\ldots,\test_m, \Lambda)=\literals{p}{m}$, $O$ and $I$ be two disjoint subsets of $\intvl1m$, and $X$ be any sequence\wehave
\begin{align*}
&\subprop a 
\forall i\in \intvl 1m-O-I,\quad   \stackrow{\key_i\n X \n \biglambda{C}{O}{I} } 
\path
 \stackrow{X \n \biglambda{C-\{i\}}{O\cup\{i\}}{I}  }
\\
&\subprop b
 \forall i\in O,\quad  \stackrow{\test_i\n X \n \biglambda{C}{O}{I}} 
\path
 \stackrow{X \n \biglambda{C}{O-\{i\}}{I\cup\{i\}}  }
\\
&\subprop c
 \forall i\in \intvl 1m- O ,\quad  \stackrow{\test_i\n X \n \biglambda{C}{O}{I} } 
\dead
\end{align*}
\end{property}
\ifisFullVersion\begin{proof}

 The proof follows from Property~\ref{prop:lock}. 

\subprop a Let $i\in \intvl 1m-O-I$. Then $\biglambda{C}{O}{I}$ can be written
$\biglambda{C}{O}{I}=\row{A \n L_i \n B}$. Hence 
\begin{align*}
& \row{\key_i\n X \n \biglambda{C}{O}{I}}
\\
&= \row{\ch{\key_i}\n X \n A \n \ch{L_i} \n B}
\\
&\rp{lock}{a}\path \row{ X \n A \n L_i^o \n B}
\\
&= \row{X \n \biglambda{C-\{i\}}{O\cup\{i\}}{I}}
\end{align*}

\subprop b Let $i\in O$. Then $\biglambda{C}{O}{I}$ can be written
$\biglambda{C}{O}{I}=\row{A \n L_i^o \n B}$. Hence 
\begin{align*}
& \row{\test_i\n X \n \biglambda{C}{O}{I}}
\\
&= \row{\ch{\test_i}\n X \n A \n \ch{L_i^o} \n B}
\\
&\rp{lock}{b}\path \row{ X \n A \n \ident{p+12i-11}{p+12i} \n B}
\\
&= \row{X \n \biglambda{C}{O-\{i\}}{I\cup\{i\}}}
\end{align*}

\subprop c Let $i\in \intvl 1m - O$. If $i\in I$, then $\test_i\in \ident{p+12i-11}{p+12i}\subset \biglambda{C}{O}{I}$, and \row{\test_i\n X \n \biglambda{C}{O}{I}} is not a valid sequence (it contains a duplicate). Otherwise, $i\in \intvl 1m - O-I$, and  $\biglambda{C}{O}{I}$ can be written $\biglambda{C}{O}{I}=\row{A \n L_i \n B}$. Hence 
\begin{equation*}
 \row{\test_i\n X \n \biglambda{C}{O}{I}}
= \row{\ch{\test_i}\n X \n A \n \ch{L_i} \n B}
\rp{lock}{c}\ \dead
\end{equation*}
\end{proof}\fi

\subsubsection{Variable}
In the following two sections, we assume that $p_\Lambda$ and $m$ are two fixed integers, and we define the $\literals$ gadget
$(\key_1,\ldots,\key_m,\test_1,\ldots,\test_m, \Lambda) = \literals{p_\Lambda}{m}$. Thus, we can use elements $\key_i$ and $\test_i$ for $i\in\intvl 1m$, and sequences $\biglambda{C}{O}{I}$ for any disjoint subsets $O$ and $I$ of $\intvl 1m$.

We now define a gadget simulating a boolean variable $x_i$. It holds two series 
of $\key$ elements: the ones with indices in $P$ (resp. $N$) open the locks corresponding 
to literals of the form $x_i$ (resp.~$\neg x_i$). 
When the triggering element, $\nu$, is brought to the head, a choice has to be made
between $P$ and $N$, and the locks associated with the chosen set (and only them) are open.

\begin{defn}\label{def:var}
%Assume we have already defined a gadget $(\key_1,\ldots,\key_m,\test_1,\ldots,\test_m, \Lambda) = \literals{p_\Lambda}{m}$ for some integers $p_\Lambda$ and $m$. 
Let $P, N$ be two disjoint subsets of $\intvl 1m$ ($P=\{\litp 1,\litp 2,\ldots,\litp q\}$, $N=\{\litn 1,\litn 2,\ldots,\litn {q'}\}$) 
 and $p$ be an integer, $\var{P}{N}{p}$ is defined by
\begin{align*}
 \var{P}{N}{p}&=(\nu, V, D)\\
\where
 (\take,\puth, G,H)&= \hook{p+2} \\
 (E,F)&= \fork{p+14}
\\
\preput{in}
 \nu&=\take \\
 V&=\row{G \n E \n \key_{\litp 1} \n \ldots \n \key_{\litp q} \n \puth \n \key_{\litn 1} \n \ldots \n \key_{\litn {q'}}\n F\n H}\\
 D&=\dock{p+2}{p+29}
 \end{align*}
%\begin{align*}
%  &\var{P}{N}{p}=(\nu, V, D)\\
% &\begin{array}{@{}r@{\ =\ }l}
% \where
%  (\take,\puth, G,H)& \hook{p+2} \\
%  (E,F)& \fork{p+14}
% \end{array}\\
% &\begin{array}{@{}r@{\ =\ }l}
% \preput{in}
%  \nu&\take \\
%  V&\row{G \n E \n \key_{\litp 1} \n \ldots \n \key_{\litp q} \n \puth \n \key_{\litn 1} \n \ldots \n \key_{\litn {q'}}\n F\n H}\\
%  D&\dock{p+2}{p+29}\\ 
% \end{array}
% \end{align*}
Given a variable gadget $(\nu, V, D)=\var{P}{N}{p}$, we write
\begin{eqnarray*}
  V^1&=& \row{G'' \n \key_{\litn 1} \n \ldots \n \key_{\litn {q'}}\n F^1\n H''}\\
 V^2&=& \row{G'' \n \key_{\litp q} \n \ldots \n \key_{\litp {1}}\n F^2\n H''} 
\end{eqnarray*}
where  $G''$, $H''$, $F^1$, $F^2$, come from the definitions of Hook (Definition~\ref{def:hook}) and Fork (Definition~\ref{def:fork}).
\end{defn}

The following property determines the possible behavior of a variable gadget. \fullOnly{It is illustrated by Figure~\ref{fig:variable}.}

\ifisFullVersion\begin{figure}
\centering
 \includegraphics{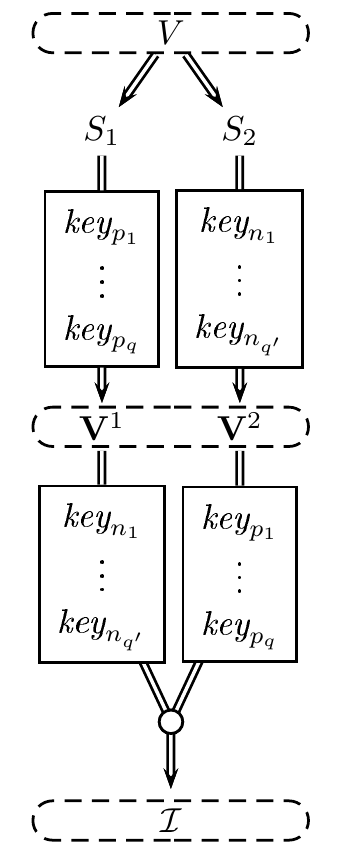}
\caption{\label{fig:variable} 
Initially, a variable gadget contains mainly the sequence $V$. Property \ref{prop:variable}{a} proves that two paths are possible, leading to sequences containing either $V^1$ or $V^2$. Along the first (resp. second) path, the locks with indices in $P$ (resp. $N$) are opened. By Property \ref{prop:variable}{b} (resp. c), there exists a path transforming $V^1$ (resp. $V^2$) into the identity over \intvl{p+1}{p+31}, which opens the remaining locks. 
}
\end{figure}\fi

\begin{property}\label{prop:variable}
Let % $m$ be an integer, 
$P$, $N$ be two disjoint subsets of $\intvl 1m$, $p$ be an integer, $X$ and $Y$ be two sequences, $O$, $I$ be two disjoint subsets of $\intvl 1m$, and $(\nu, V, D)=\var{P}{N}{p}$. For sub-property (a.) we require that $(P\cup N)\cap (O\cup I)=\emptyset$, for (b.) that $N\cap (O\cup I)=\emptyset$, and for (c.) that $P\cap (O\cup I)=\emptyset$ (these conditions are in fact necessarily satisfied by construction since all sequences considered are permutations)\wehave
\fullOnly{
\begin{equation*}
\subprop{a} 
\stack{\nu\n X\n V\n Y\n \biglambda{C}{O}{I}} \path
\left\{
  \stack{ X\n V^1 \n Y\n \biglambda{C-P}{O\cup P}{I}},
  \stack{ X\n V^2 \n Y\n \biglambda{C-N}{O\cup N}{I}}
\right\}
\qquad \subprop b
\stack{V^1\n X\n D \n Y \n \biglambda{C}{O}{I}} \path
\stack{X\n \ident{p+1}{p+31} \n Y \n \biglambda{C-N}{O\cup N}{I}} 
\qquad \subprop c
\stack{V^2\n X\n D \n Y \n \biglambda{C}{O}{I}} \path
\stack{X\n \ident{p+1}{p+31} \n Y \n \biglambda{C-P}{O\cup P}{I}} 
\end{equation*}
}
\abstractOnly{
\begin{align*}
\subprop{a} &
\row{\nu\n X\n V\n Y\n \biglambda{C}{O}{I}} \path \left.
\begin{cases}
  \row{ X\n V^1 \n Y\n \biglambda{C-P}{O\cup P}{I}},\\
  \row{ X\n V^2 \n Y\n \biglambda{C-N}{O\cup N}{I}}
\end{cases}\right\}
\\
\subprop b&
\row{V^1\n X\n D \n Y \n \biglambda{C}{O}{I}} \path
\row{X\n \ident{p+1}{p+31} \n Y \n \biglambda{C-N}{O\cup N}{I}} 
\\
\subprop c&
\row{V^2\n X\n D \n Y \n \biglambda{C}{O}{I}} \path
\row{X\n \ident{p+1}{p+31} \n Y \n \biglambda{C-P}{O\cup P}{I}} 
\end{align*}
}

\end{property}

\ifisFullVersion\begin{proof}
 \begin{align*}
  & \subprop{a} 
\row{\nu\n X\n V\n Y\n \biglambda{C}{O}{I}} &\hspace{8em}
\\
%%%%%%%%%%%%%%%%%%%%%%%%%%%%%%%%%% var a
& = 
\row{\ch{\take} \n X\n \ch{G }\n E \n \key_{\litp 1} \n \ldots \n \key_{\litp q} \n \puth \n \key_{\litn 1} \n \ldots \n \key_{\litn {q'}}\n F\n \ch{H}\n Y\n \biglambda{C}{O}{I}} 
\\
%%%%%%%%%%%%%%%%%%%%%%%%%%%%%%%%%% var a
& \rp{hook}{a}\path 
\row{\ch{E} \n \key_{\litp 1} \n \ldots \n \key_{\litp q} \n \puth \n \key_{\litn 1} \n \ldots \n \key_{\litn {q'}}\n \ch{F}\n G'\n \rev X\n H'\n Y\n \biglambda{C}{O}{I}} 
\\
%%%%%%%%%%%%%%%%%%%%%%%%%%%%%%%%%% var a
& \rp{fork}{a}\path \{S_1,S_2\} \mbox{ \quad (where sequences $S_1$ and $S_2$ are described below)}
 \end{align*}
First,
\begin{align*}
 &S_1 = 
\row{\ch{\key_{\litp 1}} \n \key_{\litp 2} \n \ldots \n \key_{\litp q} \n \puth \n \key_{\litn 1} \n \ldots \n \key_{\litn {q'}}\n F^1\n G'\n \rev X\n H'\n Y\n \ch{\biglambda{C}{O}{I}}}  &\hspace{8em}\\
%%%%%%%%%%%%%%%%%%%%%%%%%%%%%%%%%% var a S1
& \rp{literals}{a}\path 
\row{\ch{\key_{\litp 2}} \n \ldots \n \key_{\litp q} \n \puth \n \key_{\litn 1} \n \ldots \n \key_{\litn {q'}}\n F^1\n G'\n \rev X\n H'\n Y\n \ch{\biglambda{C-\{\litp 1\}}{O\cup\{\litp 1\}}{I}}}
\\
%%%%%%%%%%%%%%%%%%%%%%%%%%%%%%%%%% var a S1
&\quad\vdots\\ %&\rp{literals}{a}\path  \ldots \\
& \rp{literals}{a}\path 
\row{\ch{\puth}\n \key_{\litn 1} \n \ldots \n \key_{\litn {q'}}\n F^1\n \ch{G'}\n \rev X\n \ch{H'}\n Y\n \biglambda{C-P}{O\cup P}{I}}
\\
%%%%%%%%%%%%%%%%%%%%%%%%%%%%%%%%%% var a S1
& \rp{hook}{b}\path 
\row{X\n G''\n \key_{\litn 1} \n \ldots \n \key_{\litn {q'}}\n F^1\n H''\n Y\n \biglambda{C-P}{O\cup P}{I}}
\\
%%%%%%%%%%%%%%%%%%%%%%%%%%%%%%%%%% var a S1
&=\row{ X\n V^1 \n Y\n \biglambda{C-P}{O\cup P}{I}}
\end{align*}
Second,
\begin{align*}
 &S_2 = 
\row{\ch{\key_{\litn {q'}}}\n \key_{\litn {q'-1}} \n \ldots \n\key_{\litn 1}   \n \puth \n \key_{\litp q}\n \ldots \n \key_{\litp 1}\n F^2\n G'\n \rev X\n H'\n Y\n \ch{\biglambda{C}{O}{I}}}  &\hspace{8em}
\\
%%%%%%%%%%%%%%%%%%%%%%%%%%%%%%%%%% var a S2
& \rp{literals}{a}\path 
\row{\ch{\key_{\litn {q'-1}}} \n \ldots \n\key_{\litn 1}   \n \puth \n \key_{\litp q}\n \ldots \n \key_{\litp 1}\n F^2\n G'\n \rev X\n H'\n Y\n \ch{\biglambda{C-\{\litn {q'}\}}{O\cup\{\litn {q'}\}}{I}}}
\\
%%%%%%%%%%%%%%%%%%%%%%%%%%%%%%%%%% var a S2
&\quad\vdots\\ %&\rp{literals}{a}\path  \ldots \\
& \rp{literals}{a}\path 
\row{\ch{\puth}\n \key_{\litp q}\n \ldots \n \key_{\litp 1}\n F^2\n \ch{G'}\n \rev X\n \ch{H'}\n Y\n \biglambda{C-N}{O\cup N}{I}}
\\
%%%%%%%%%%%%%%%%%%%%%%%%%%%%%%%%%% var a S2
& \rp{hook}{b}\path 
\row{X\n G''\n \key_{\litp q}\n \ldots \n \key_{\litp 1}\n F^2\n H''\n Y\n \biglambda{C-N}{O\cup N}{I}}
\\
%%%%%%%%%%%%%%%%%%%%%%%%%%%%%%%%%% var a S2
&=\row{ X\n V^2 \n Y\n \biglambda{C-N}{O\cup N}{I}}
\end{align*}

\begin{align*}
 & \subprop b \row{ V^1 \n X \n D \n Y\n \biglambda{C}{O}{I}}  &\hspace{15em}
\\
&= \row{\ch{G''}\n \key_{\litn 1} \n \ldots \n \key_{\litn {q'}}\n F^1\n \ch{H''}\n X\n D \n Y\n \biglambda{C}{O}{I}}
\\
&\rp{hook}{c}
\path \row{\ch{\key_{\litn 1} }\n\key_{\litn 2} \n \ldots \n \key_{\litn {q'}}\n F^1\n \rev \ident{p+3}{p+14}\n X\n D \n Y\n \ch{\biglambda{C}{O}{I}}}
\\
&\rp{literals}{a}
\path \row{\ch{\key_{\litn 2} } \n \ldots \n \key_{\litn {q'}}\n F^1\n \rev \ident{p+3}{p+14}\n X\n D \n Y\n \ch{\biglambda{C-\{\litn 1\}}{O\cup\{\litn 1\}}{I}}}
\\
&\quad\vdots\\ %&\rp{literals}{a}\path  \ldots \\
& \rp{literals}{a}\path \row{\ch{F^1}\n \rev \ident{p+3}{p+14}\n X\n D \n Y\n \biglambda{C-N}{O\cup N}{I}}
\\
& \rp{fork}{b}\path 
\row{\ch{\rev \ident{p+15}{p+29}}\n \ch{\rev \ident{p+3}{p+14}}\n X\n\ch{ D} \n Y\n \biglambda{C-N}{O\cup N}{I}}
\\
& \rp{dock}{}\path 
\row{X\n \ident{p+1}{p+31} \n Y\n \biglambda{C-N}{O\cup N}{I}}
\end{align*}

\begin{align*}
 & \subprop c \row{ V^2 \n X \n D \n Y\n \biglambda{C}{O}{I}}  &\hspace{15em}
\\
&= \row{\ch{G''}\n \key_{\litp q}\n \ldots \n \key_{\litp 1}\n F^2\n \ch{H''}\n X\n D \n Y\n \biglambda{C}{O}{I}}
\\
&\rp{hook}{c}
\path \row{\ch{\key_{\litp q}} \n \key_{\litp {q-1}}\n \ldots \n \key_{\litp 1}\n F^2\n \rev \ident{p+3}{p+14}\n X\n D \n Y\n \ch{\biglambda{C}{O}{I}}}
\\
&\rp{literals}{a}
\path \row{\ch{\key_{\litp {q-1}}}\n \ldots \n \key_{\litp 1}\n F^2\n \rev \ident{p+3}{p+14}\n X\n D \n Y\n \ch{\biglambda{C-\{\litp q\}}{O\cup\{\litp q\}}{I}}}
\\
&\quad\vdots\\ %&\rp{literals}{a}\path  \ldots \\
& \rp{literals}{a}\path \row{\ch{F^2}\n \rev \ident{p+3}{p+14}\n X\n D \n Y\n \biglambda{C-P}{O\cup P}{I}}
\\
& \rp{fork}{c}\path 
\row{\ch{\rev \ident{p+15}{p+29}}\n \ch{\rev \ident{p+3}{p+14}}\n X\n\ch{ D} \n Y\n \biglambda{C-P}{O\cup P}{I}}
\\
& \rp{dock}{}\path 
\row{X\n \ident{p+1}{p+31} \n Y\n \biglambda{C-P}{O\cup P}{I}}
\end{align*}
\end{proof}\fi

\subsubsection{Clause}

The following gadget simulates a 3-clause in a boolean formula. It holds the $\test$ elements for three locks, corresponding to three literals. When the triggering element, $\gamma$, is at the head of a sequence, three distinct efficient paths may be followed. In each such path, one of the three locks is tested: in other words, any efficient path leading to the identity requires one of the locks to be open.

\begin{defn}\label{def:clause}
%Assume we have already defined a gadget $(\key_1,\ldots,\key_m,\test_1,\ldots,\test_m, \Lambda) = \literals{p_\Lambda}{m}$ for some integers $p_\Lambda$ and $m$. 
Let $a,b,c\in\intvl 1m$ be pairwise distinct integers and $p$ be an integer, 
$\clause{a}{b}{c}{p}$
is defined by
\begin{eqnarray*}
\clause{a}{b}{c}{p}&=&(\gamma,\Gamma,\Delta) \\
\where 
(E_1,F_1)&=&\fork{p+2}\\
(E_2,F_2)&=&\fork{p+45}\\%21 %22 36 -> 46 60
(\take_1,\puth_1,G_1,H_1)&=&\hook{p+21}\\%40 %41 52 -> 22 33
(\take_2,\puth_2,G_2,H_2)&=&\hook{p+33}\\%57 %57 68 -> 34 45
D_1&=&\dock{p+2}{p+17}\\
D_2&=&\dock{p+21}{p+60}\\
\mbox{in \quad}
 \gamma&=& \take_1  \\
 \Gamma&=&\row{ 
 G_1\n E_1 \n \take_2 \n \puth_1\n 
\test_c \n F_1 \n G_2 \n E_2 \n \test_a \n \puth_2 \n \test_b \n F_2 \n H_2\n H_1}\\
\Delta&=& \row{ D_1 \n D_2 } %\n D_3 \n D_4}
\end{eqnarray*}

Given a clause gadget $(\gamma,\Gamma,\Delta)=\clause{a}{b}{c}{p}$, we write
\begin{eqnarray*}
\Gamma^1&=&\row{G''_1 \n \test_c \n F_1^1 \n G''_2 \n \test_b \n F^1_2 \n H''_2\n H''_1}\\
\Gamma^2&=&\row{G''_1 \n \test_c \n F_1^1 \n G''_2 \n \test_a \n F^2_2 \n H''_2\n H''_1}\\
\Gamma^3&=&\row{G''_1 \n \take_2 \n F_1^2 \n G_2 \n E_2 \n \test_a \n \puth_2 \n \test_b \n F_2 \n H_2\n H''_1}
\end{eqnarray*}
\end{defn}

The following two properties determine the possible behavior of a clause gadget. \fullOnly{They are illustrated by Figure~\ref{fig:clauses}.}

\ifisFullVersion\begin{figure}
\centering
 \includegraphics{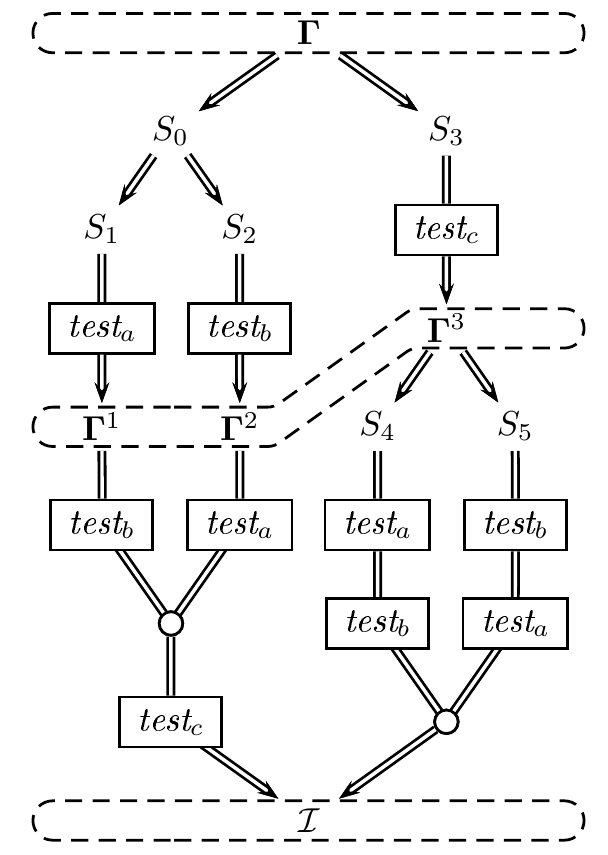}
\caption{\label{fig:clauses} 
Initially, a clause gadget contains mainly the sequence $\Gamma$. Property~\ref{prop:clause} proves that three paths may be possible, leading to sequences containing either $\Gamma^1$, $\Gamma^2$ or $\Gamma^3$. Because of the $\test$ elements, each path requires one lock to be open (either $a$, $b$ or $c$). By Property~\ref{prop:clause2}{a} (resp. b, c), there exists a path transforming $\Gamma^1$ (resp. $\Gamma^2$, $\Gamma^3$) into the identity over \intvl{p+1}{p+62}, provided the remaining locks are open. 
}
\end{figure}\fi

\begin{property}\label{prop:clause}
Let $X$ and $Y$ be any sequences, and $O,I$ be two disjoint subsets of $\intvl1m$\wehave
\begin{equation*}
\stackrow{ \gamma \n X \n \Gamma \n Y  \n\biglambda{C}{O}{I} } \path\mathbb T
 \end{equation*}

where $\mathbb T$ contains from 0 to 3 sequences, and is defined by:
\ifisFullVersion
\begin{equation*}
\stack{  X \n \Gamma^1\n Y   \n \biglambda{C}{O-\{a\}}{I\cup\{a\}} } \in \mathbb T \mbox{ iff } a\in O
\qquad\quad
\stackrow{  X \n \Gamma^2\n Y   \n \biglambda{C}{O-\{b\}}{I\cup\{b\}} } \in \mathbb T \mbox{ iff } b\in O
\qquad\quad
\stackrow{  X \n \Gamma^3\n Y   \n \biglambda{C}{O-\{c\}}{I\cup\{c\}} } \in \mathbb T \mbox{ iff } c\in O
\end{equation*}
\else
\begin{eqnarray*}
\row{  X \n \Gamma^1\n Y   \n \biglambda{C}{O-\{a\}}{I\cup\{a\}} } \in \mathbb T &\mbox{ iff }& a\in O
\\
\row{  X \n \Gamma^2\n Y   \n \biglambda{C}{O-\{b\}}{I\cup\{b\}} } \in \mathbb T &\mbox{ iff }& b\in O
\\
\row{  X \n \Gamma^3\n Y   \n \biglambda{C}{O-\{c\}}{I\cup\{c\}} } \in \mathbb T &\mbox{ iff }& c\in O
\end{eqnarray*}
\fi

\end{property}
% \begin{corollary}
%  %Let $X$, $Y$ and $Z$ be any sequences, and $O,I$ be two disjoint subsets of $\intvl1m$. Then 
% If $\{a,b,c\}\cap O = \emptyset$, then
% \begin{equation*}
% \fstack{ \gamma \n X \n \Gamma \n Y  \n\biglambda{C}{O}{I} } \pathDead
%  \end{equation*}
% \end{corollary}

\ifisFullVersion\begin{proof}
\begin{align*}
&\row{ \gamma \n X \n \Gamma \n Y  \n\biglambda{C}{O}{I} }&&
%%%%%%%%%%%%%%%%%%%%%%%%%%%%%%%%%%%%%%%%%%%%%%%%%% Gamma
\\
&=  \row{\ch{ \take_1} \n X\n 
 \ch{G_1}\n E_1 \n \take_2 \n \puth_1\n 
\test_c \n F_1 \n G_2 \n E_2 \n \test_a \n \puth_2 \n \test_b \n F_2 \n H_2\n\ch{ H_1}
\n Y  \n\biglambda{C}{O}{I} }
%%%%%%%%%%%%%%%%%%%%%%%%%%%%%%%%%%%%%%%%%%%%%%%%%% Gamma
\\
&\rp{hook}{a}\path
  \row{  
 \ch{E_1} \n \take_2 \n \puth_1\n 
\test_c \n \ch{F_1} \n G_2 \n E_2 \n \test_a \n \puth_2 \n \test_b \n F_2 \n H_2\n 
G'_1\n \rev X \n H'_1
\n Y  \n\biglambda{C}{O}{I} }
%%%%%%%%%%%%%%%%%%%%%%%%%%%%%%%%%%%%%%%%%%%%%%%%%% Gamma
\\
&\rp{fork}{a}\path \{S_0,S_3\}
%\left\{
%\begin{array}{l}  
%\\
\end{align*}
\begin{align*}
 S_0&=
  \row{  \ch{ \take_2 }\n \puth_1\n \test_c \n F_1^1 
\n \ch{G_2} \n E_2 \n \test_a \n \puth_2 \n \test_b \n F_2 \n \ch{H_2}\n 
G'_1\n \rev X \n H'_1
\n Y  \n\biglambda{C}{O}{I} } &&\quad
%%%%%%%%%%%%%%%%%%%%%%%%%%%%%%%%%%%%%%%%%%%%%%%%%% S0
\\
&\rp{hook}{a}\path 
 \row{   \ch{E_2 }\n \test_a \n \puth_2 \n \test_b \n\ch{ F_2 }           
\n G_2' \n \rev F_1^1 \n \test_c \n \puth_1\n H_2'\n 
G'_1\n \rev X \n H'_1
\n Y  \n\biglambda{C}{O}{I} } 
%%%%%%%%%%%%%%%%%%%%%%%%%%%%%%%%%%%%%%%%%%%%%%%%%% S0
\\
&\rp{fork}{a}\path \{ S_1, S_2\}  &&\quad
%\\
\end{align*}
\begin{align*}
S_1&=
%\left\{\begin{array}{l}
 \row{ \ch{  \test_a} \n \puth_2 \n \test_b \n F^1_2            
\n G_2' \n \rev F_1^1 \n \test_c \n \puth_1\n H_2'\n 
G'_1\n \rev X \n H'_1
\n Y  \n\ch{\biglambda{C}{O}{I} }}
%%%%%%%%%%%%%%%%%%%%%%%%%%%%%%%%%%%%%%%%%%%%%%%%%% S1
\\
\mbox{if } &a \notin  O \mbox{ then } S_1\rp{literals}{c}\ \dead \\
\mbox{if } &a \in  O \mbox{ then } \\
S_1 &\rp{literals}{b}\path 
 \row{  \ch{ \puth_2} \n \test_b \n F^1_2            
\n\ch{ G_2'} \n \rev F_1^1 \n \test_c \n \puth_1\n \ch{H_2'}\n 
G'_1\n \rev X \n H'_1
\n Y  \n\biglambda{C}{O-\{a\}}{I\cup\{a\}} } 
%%%%%%%%%%%%%%%%%%%%%%%%%%%%%%%%%%%%%%%%%%%%%%%%%% S1
\\
&\rp{hook}{b}\path 
 \row{ \ch{\puth_1} \n \test_c \n  F_1^1            
\n G''_2 \n\test_b \n F^1_2 \n H''_2\n 
\ch{G'_1}\n \rev X \n \ch{H'_1}
\n Y  \n\biglambda{C}{O-\{a\}}{I\cup\{a\}} }
%%%%%%%%%%%%%%%%%%%%%%%%%%%%%%%%%%%%%%%%%%%%%%%%%% S1
\\
&\rp{hook}{b}\path 
 \row{ X \n G''_1 \n
 \test_c \n  F_1^1            
\n G''_2 \n\test_b \n F^1_2 \n H''_2\n 
H''_1
\n Y  \n\biglambda{C}{O-\{a\}}{I\cup\{a\}} }
\\
& =  \row{X \n \Gamma^1\n Y   \n \biglambda{C}{O-\{a\}}{I\cup\{a\}} }&&\quad
%%%%%%%%%%%%%%%%%%%%%%%%%%%%%%%%%%%%%%%%%%%%%%%%%% S1
%\\%%                                >>>>>>>>>>>>>>>>    sub split
\end{align*}
\begin{align*}
S_2&=
\row{  \ch{ \test_b} \n \puth_2 \n \test_a \n F^2_2            
\n G_2' \n \rev F_1^1 \n \test_c \n \puth_1\n H_2'\n 
G'_1\n \rev X \n H'_1
\n Y  \n\ch{\biglambda{C}{O}{I}} } 
%%%%%%%%%%%%%%%%%%%%%%%%%%%%%%%%%%%%%%%%%%%%%%%%%% S2
\\
\mbox{if } &b \notin  O \mbox{ then } S_2\rp{literals}{c}\ \dead \\
\mbox{if } &b \in  O \mbox{ then } \\
S_2&\rp{literals}{b}\path
\row{   \ch{\puth_2} \n \test_a \n F^2_2            
\n\ch{ G_2'} \n \rev F_1^1 \n \test_c \n \puth_1\n \ch{H_2'}\n 
G'_1\n \rev X \n H'_1
\n Y  \n\biglambda{C}{O-\{b\}}{I\cup\{b\}} } 
%%%%%%%%%%%%%%%%%%%%%%%%%%%%%%%%%%%%%%%%%%%%%%%%%% S2
\\
&\rp{hook}{b}\path
\row{\ch{\puth_1}  \n \test_c \n  F_1^1 \n
G''_2  \n \test_a \n F^2_2            
\n H''_2\n 
\ch{G'_1}\n \rev X \n \ch{H'_1}
\n Y  \n\biglambda{C}{O-\{b\}}{I\cup\{b\}} } 
%%%%%%%%%%%%%%%%%%%%%%%%%%%%%%%%%%%%%%%%%%%%%%%%%% S2
\\
&\rp{hook}{b}\path
\row{ X\n   G''_1 \n
\test_c \n  F_1^1 \n
G''_2 \n \test_a \n F^2_2            
\n H''_2\n 
H''_1
\n Y  \n\biglambda{C}{O-\{b\}}{I\cup\{b\}} }
%%%%%%%%%%%%%%%%%%%%%%%%%%%%%%%%%%%%%%%%%%%%%%%%%% S2
\\
& =  \row{X \n \Gamma^2\n Y   \n \biglambda{C}{O-\{b\}}{I\cup\{b\}} }&&\quad
%\end{array}
%\right\}
%%%%%%%%%%%%%%%%%%%%%%%%%%%%%%%%%%%%%%%%%%%%%%%%%% S2
%\\%% main split
\end{align*}
\begin{align*}
  S_3&= \row{    \ch{ \test_c} \n \puth_1\n\take_2\n F_1^2 
\n G_2 \n E_2 \n \test_a \n \puth_2 \n \test_b \n F_2 \n H_2\n 
G'_1\n \rev X \n H'_1
\n Y  \n\ch{\biglambda{C}{O}{I}} } 
%%%%%%%%%%%%%%%%%%%%%%%%%%%%%%%%%%%%%%%%%%%%%%%%%% S3
\\
\mbox{if } &c \notin  O \mbox{ then } S_3\rp{literals}{c}\ \dead \\
\mbox{if } &c \in  O \mbox{ then } \\
S_3& \rp{literals}{b}\path 
   \row{   \ch{ \puth_1}\n\take_2\n F_1^2 
\n G_2 \n E_2 \n \test_a \n \puth_2 \n \test_b \n F_2 \n H_2\n 
\ch{G'_1}\n \rev X \n \ch{H'_1}
\n Y  \n\biglambda{C}{O-\{c\}}{I\cup\{c\}} } 
%%%%%%%%%%%%%%%%%%%%%%%%%%%%%%%%%%%%%%%%%%%%%%%%%% S3
\\
&\rp{hook}{b}\path 
   \row{  X \n G''_1\n  \take_2\n F_1^2 
\n G_2 \n E_2 \n \test_a \n \puth_2 \n \test_b \n F_2 \n H_2\n 
H''_1
\n Y  \n\biglambda{C}{O-\{c\}}{I\cup\{c\}} } 
\\ 
%%%%%%%%%%%%%%%%%%%%%%%%%%%%%%%%%%%%%%%%%%%%%%%%%% S3
&=  \row{X \n \Gamma^3\n Y   \n \biglambda{C}{O-\{c\}}{I\cup\{c\}} }&&\quad
%\end{array}
%\right\}
\end{align*}

%Hence, we have $\row{ \gamma \n X \n \Gamma \n Y  \n\biglambda{C}{O}{I} } \path \{S_1,S_2,S_3\}$, 
\end{proof}\fi

\begin{property}\label{prop:clause2}
Let $Y$ and $Z$ be any sequences, and $O,I$ be two disjoint subsets of $\intvl1m$\wehave

\begin{eqnarray*}
\subprop a \mbox{If } b,c\in O ,\mbox{ then }\ \ \fstackrow{ \Gamma^1\n Y\n
 \Delta\n Z\n \biglambda{C}{O}{I} } 
&\path &
\fstackrow{Y \n \ident{p+1}{p+62}\n Z\n \biglambda{C}{O-\{b,c\}}{I\cup\{b,c\}} } \\
\subprop b \mbox{If } a,c\in O ,\mbox{ then }\ \ \fstackrow{ \Gamma^2\n Y\n
 \Delta\n Z\n \biglambda{C}{O}{I} } 
&\path &
\fstackrow{Y \n \ident{p+1}{p+62}\n Z\n \biglambda{C}{O-\{a,c\}}{I\cup\{a,c\}} } \\
\subprop c \mbox{If } a,b\in O ,\mbox{ then }\ \ \fstackrow{ \Gamma^3\n Y\n
 \Delta\n Z\n \biglambda{C}{O}{I} } 
&\path &
\fstackrow{Y \n \ident{p+1}{p+62}\n Z\n \biglambda{C}{O-\{a,b\}}{I\cup\{a,b\}} } 
\end{eqnarray*}
\end{property}

\ifisFullVersion\begin{proof}
\begin{align*} 
&\subprop a\row{ \Gamma^1\n Y\n \Delta\n Z\n \biglambda{C}{O}{I} }
%%%%%%%%%%%%%%%%%%%%%%%%%%%%%%%%%%%%%%%%%%%%%%%%%% Gamma 1
\\
&= \row{\ch{G''_1} \n \test_c \n F_1^1 \n G''_2 \n \test_b \n F^1_2 \n H''_2\n \ch{H''_1}\n Y\n D_1\n D_2\n  Z\n \biglambda{C}{O}{I}}
%%%%%%%%%%%%%%%%%%%%%%%%%%%%%%%%%%%%%%%%%%%%%%%%%% Gamma 1
\\
&\rp{hook}{c}\path
\row{ \ch{\test_c} \n F_1^1 \n G''_2 \n \test_b \n F^1_2 \n H''_2\n \rev\ident{p+22}{p+33} \n Y\n D_1\n D_2\n  Z\n \ch{\biglambda{C}{O}{I}}}
%%%%%%%%%%%%%%%%%%%%%%%%%%%%%%%%%%%%%%%%%%%%%%%%%% Gamma 1
\\
&\rp{literals}{b}\path
\row{ \ch{F_1^1} \n G''_2 \n \test_b \n F^1_2 \n H''_2\n \rev\ident{p+22}{p+33} \n Y\n D_1\n D_2\n  Z\n \biglambda{C}{O-\{c\}}{I\cup\{c\}}}
%%%%%%%%%%%%%%%%%%%%%%%%%%%%%%%%%%%%%%%%%%%%%%%%%% Gamma 1
\\
&\rp{fork}{b}\path
\row{ \ch{\rev\ident{p+3}{p+17}} \n G''_2 \n \test_b \n F^1_2 \n H''_2\n \rev\ident{p+22}{p+33} \n Y\n \ch{D_1}\n D_2\n  Z\n \biglambda{C}{O-\{c\}}{I\cup\{c\}}}
%%%%%%%%%%%%%%%%%%%%%%%%%%%%%%%%%%%%%%%%%%%%%%%%%% Gamma 1
\\
&\rp{dock}{}\path
\row{\ch{ G''_2} \n \test_b \n F^1_2 \n \ch{H''_2}\n \rev\ident{p+22}{p+33}  \n Y\n \ident{p+1}{p+19}\n D_2\n  Z\n \biglambda{C}{O-\{c\}}{I\cup\{c\}}}
%%%%%%%%%%%%%%%%%%%%%%%%%%%%%%%%%%%%%%%%%%%%%%%%%% Gamma 1
\\
&\rp{hook}{c}\path
\row{ \ch{\test_b} \n F^1_2 \n \rev\ident{p+34}{p+45} \n  \rev\ident{p+22}{p+33} \n Y\n \ident{p+1}{p+19}\n D_2\n  Z\n \ch{\biglambda{C}{O-\{c\}}{I\cup\{c\}}}}
%%%%%%%%%%%%%%%%%%%%%%%%%%%%%%%%%%%%%%%%%%%%%%%%%% Gamma 1
\\
&\rp{literals}{b}\path
\row{ \ch{F^1_2} \n  \rev\ident{p+34}{p+45} \n \rev\ident{p+22}{p+33} \n Y\n \ident{p+1}{p+19}\n D_2\n  Z\n \biglambda{C}{O-\{b,c\}}{I\cup\{b,c\}}}
%%%%%%%%%%%%%%%%%%%%%%%%%%%%%%%%%%%%%%%%%%%%%%%%%% Gamma 1
\\
&\rp{fork}{b}\path
\row{ \ch{\rev\ident{p+46}{p+60}} \n \ch{\rev\ident{p+34}{p+45}} \n \ch{\rev\ident{p+22}{p+33}}
 \n Y\n \ident{p+1}{p+19}\n \ch{D_2}\n  Z\n \biglambda{C}{O-\{b,c\}}{I\cup\{b,c\}}}
%%%%%%%%%%%%%%%%%%%%%%%%%%%%%%%%%%%%%%%%%%%%%%%%%% Gamma 1
% \\
% &\path
% \row{ \ch{ \rev\ident{p+34}{p+45}} \n \rev\ident{p+22}{p+33} \n Y\n \ident{p+1}{p+19}\n \ident{p+20}{p+38}\n D_3\n \ch{D_4}\n Z\n \biglambda{C}{O-\{b,c\}}{I\cup\{b,c\}}}
% %%%%%%%%%%%%%%%%%%%%%%%%%%%%%%%%%%%%%%%%%%%%%%%%%% Gamma 1
% \\
% &\path
% \row{  \ch{\rev\ident{p+22}{p+33}} \n Y\n \ident{p+1}{p+19}\n \ident{p+20}{p+38}\n \ch{D_3}\n \ident{p+55}{p+70}\n Z\n \biglambda{C}{O-\{b,c\}}{I\cup\{b,c\}}}
%%%%%%%%%%%%%%%%%%%%%%%%%%%%%%%%%%%%%%%%%%%%%%%%%% Gamma 1
\\
&\rp{dock}{}\path
\row{  Y\n \ident{p+1}{p+19}\n \ident{p+20}{p+62}
%\n \ident{p+39}{p+54} \n \ident{p+55}{p+70}
\n Z\n \biglambda{C}{O-\{b,c\}}{I\cup\{b,c\}}}
%%%%%%%%%%%%%%%%%%%%%%%%%%%%%%%%%%%%%%%%%%%%%%%%%% Gamma 1
\\
&=
\row{  Y\n \ident{p+1}{p+62}\n Z\n \biglambda{C}{O-\{b,c\}}{I\cup\{b,c\}}}
\end{align*}

%%%%%%%%%%%%%%%%%%

%%        Gamma 2

%%%%%%%%%%%%%%%%%%

\begin{align*} 
&\subprop b\row{ \Gamma^2\n Y\n \Delta\n Z\n \biglambda{C}{O}{I} }
%%%%%%%%%%%%%%%%%%%%%%%%%%%%%%%%%%%%%%%%%%%%%%%%%% Gamma 2
\\
&= \row{\ch{G''_1} \n \test_c \n F_1^1 \n G''_2 \n \test_a \n F^2_2 \n H''_2\n \ch{H''_1}\n Y\n D_1\n D_2\n  Z\n \biglambda{C}{O}{I}}
%%%%%%%%%%%%%%%%%%%%%%%%%%%%%%%%%%%%%%%%%%%%%%%%%% Gamma 2
\\
&\rp{hook}{c}\path
\row{ \ch{\test_c} \n F_1^1 \n G''_2 \n \test_a \n F^2_2 \n H''_2\n \rev\ident{p+22}{p+33} \n Y\n D_1\n D_2\n  Z\n \ch{\biglambda{C}{O}{I}}}
%%%%%%%%%%%%%%%%%%%%%%%%%%%%%%%%%%%%%%%%%%%%%%%%%% Gamma 2
\\
&\rp{literals}{b}\path
\row{ \ch{F_1^1} \n G''_2 \n \test_a \n F^2_2 \n H''_2\n \rev\ident{p+22}{p+33} \n Y\n D_1\n D_2\n  Z\n \biglambda{C}{O-\{c\}}{I\cup\{c\}}}
%%%%%%%%%%%%%%%%%%%%%%%%%%%%%%%%%%%%%%%%%%%%%%%%%% Gamma 2
\\
&\rp{fork}{b}\path
\row{ \ch{\rev\ident{p+3}{p+17}} \n G''_2 \n \test_a \n F^2_2 \n H''_2\n \rev\ident{p+22}{p+33} \n Y\n \ch{D_1}\n D_2\n  Z\n \biglambda{C}{O-\{c\}}{I\cup\{c\}}}
%%%%%%%%%%%%%%%%%%%%%%%%%%%%%%%%%%%%%%%%%%%%%%%%%% Gamma 2
\\
&\rp{dock}{}\path
\row{\ch{ G''_2} \n \test_a \n F^2_2 \n \ch{H''_2}\n \rev\ident{p+22}{p+33}  \n Y\n \ident{p+1}{p+19}\n D_2\n  Z\n \biglambda{C}{O-\{c\}}{I\cup\{c\}}}
%%%%%%%%%%%%%%%%%%%%%%%%%%%%%%%%%%%%%%%%%%%%%%%%%% Gamma 2
\\
&\rp{hook}{c}\path
\row{ \ch{\test_a} \n F^2_2 \n \rev\ident{p+34}{p+45} \n  \rev\ident{p+22}{p+33} \n Y\n \ident{p+1}{p+19}\n D_2\n  Z\n \ch{\biglambda{C}{O-\{c\}}{I\cup\{c\}}}}
%%%%%%%%%%%%%%%%%%%%%%%%%%%%%%%%%%%%%%%%%%%%%%%%%% Gamma 2
\\
&\rp{literals}{b}\path
\row{ \ch{F^2_2} \n  \rev\ident{p+34}{p+45} \n \rev\ident{p+22}{p+33} \n Y\n \ident{p+1}{p+19}\n D_2\n  Z\n \biglambda{C}{O-\{a,c\}}{I\cup\{a,c\}}}
%%%%%%%%%%%%%%%%%%%%%%%%%%%%%%%%%%%%%%%%%%%%%%%%%% Gamma 2
\\
&\rp{fork}{c}\path
\row{ \ch{\rev\ident{p+46}{p+60}} \n  \ch{\rev\ident{p+34}{p+45}} \n \ch{\rev\ident{p+22}{p+33}} \n Y\n \ident{p+1}{p+19}\n \ch{D_2}\n  Z\n \biglambda{C}{O-\{a,c\}}{I\cup\{a,c\}}}
% %%%%%%%%%%%%%%%%%%%%%%%%%%%%%%%%%%%%%%%%%%%%%%%%%% Gamma 2
% \\
% &\path
% \row{ \ch{ \rev\ident{p+34}{p+45}} \n \rev\ident{p+22}{p+33} \n Y\n \ident{p+1}{p+19}\n \ident{p+20}{p+38}\n D_3\n \ch{D_4}\n Z\n \biglambda{C}{O-\{a,c\}}{I\cup\{a,c\}}}
% %%%%%%%%%%%%%%%%%%%%%%%%%%%%%%%%%%%%%%%%%%%%%%%%%% Gamma 2
% \\
% &\path
% \row{  \ch{\rev\ident{p+22}{p+33}} \n Y\n \ident{p+1}{p+19}\n \ident{p+20}{p+38}\n \ch{D_3}\n \ident{p+55}{p+70}\n Z\n \biglambda{C}{O-\{a,c\}}{I\cup\{a,c\}}}
% %%%%%%%%%%%%%%%%%%%%%%%%%%%%%%%%%%%%%%%%%%%%%%%%%% Gamma 2
\\
&\rp{dock}{}\path
\row{  Y\n \ident{p+1}{p+19}\n \ident{p+20}{p+62}
%\n \ident{p+39}{p+54} \n \ident{p+55}{p+70}
\n Z\n \biglambda{C}{O-\{a,c\}}{I\cup\{a,c\}}}
%%%%%%%%%%%%%%%%%%%%%%%%%%%%%%%%%%%%%%%%%%%%%%%%%% Gamma 2
\\
&=
\row{  Y\n \ident{p+1}{p+62}\n Z\n \biglambda{C}{O-\{a,c\}}{I\cup\{a,c\}}}
\end{align*}

\begin{align*}
%%%%%%%%%%%%%%%%%%%%%%%%%%%%%%%%%%%%%%%%%%%%%%%%%% Gamma 3
 &\subprop c\row{\Gamma^3\n Y  \n \Delta \n Z\n \biglambda{C}{O}{I} }
\\
&=
 \row{  \ch{G''_1}\n  \take_2\n F_1^2 
\n G_2 \n E_2 \n \test_a \n \puth_2 \n \test_b \n F_2 \n H_2\n 
\ch{H''_1}
\n Y\n D_1\n D_2\n  Z \n\biglambda{C}{O}{I} } 
\\ 
%%%%%%%%%%%%%%%%%%%%%%%%%%%%%%%%%%%%%%%%%%%%%%%%%% Gamma 3
&\rp{hook}{c}\path
 \row{  \ch{\take_2}\n F_1^2 
\n \ch{G_2} \n E_2 \n \test_a \n \puth_2 \n \test_b \n F_2 \n \ch{H_2}\n 
\rev \ident{p+22}{p+33}
\n Y\n D_1\n D_2\n  Z \n\biglambda{C}{O}{I} } 
\\ 
%%%%%%%%%%%%%%%%%%%%%%%%%%%%%%%%%%%%%%%%%%%%%%%%%% Gamma 3
&\rp{hook}{a}\path
 \row{ \ch{E_2 }\n \test_a \n \puth_2 \n \test_b \n \ch{F_2} \n 
G_2' \n\rev F_1^2 \n  H_2'\n 
\rev \ident{p+22}{p+33}
\n Y\n D_1\n D_2\n  Z \n\biglambda{C}{O}{I} } 
\\ 
%%%%%%%%%%%%%%%%%%%%%%%%%%%%%%%%%%%%%%%%%%%%%%%%%% Gamma 3
&\rp{fork}{a}\path \{S_4, S_5 \}
\\ 
%%%%%%%%%%%%%%%%%%%%%%%%%%%%%%%%%%%%%%%%%%%%%%%%%% Gamma 3 S4
&S_4 = 
 \row{ \ch{\test_a} \n \puth_2 \n \test_b \n F_2^1 \n 
G_2' \n\rev F_1^2 \n  H_2'\n 
\rev \ident{p+22}{p+33}
\n Y\n D_1\n D_2\n  Z \n\ch{\biglambda{C}{O}{I}} } 
\\
%%%%%%%%%%%%%%%%%%%%%%%%%%%%%%%%%%%%%%%%%%%%%%%%%% Gamma 3 S4
&\rp{literals}{b}\path 
 \row{\ch{ \puth_2 }\n \test_b \n F_2^1 \n 
\ch{G_2'} \n\rev F_1^2 \n  \ch{H_2'}\n 
\rev \ident{p+22}{p+33}
\n Y\n D_1\n D_2\n  Z \n\biglambda{C}{O-\{a\}}{I\cup\{a\}} } 
\\
%%%%%%%%%%%%%%%%%%%%%%%%%%%%%%%%%%%%%%%%%%%%%%%%%% Gamma 3 S4
&\rp{hook}{b}\path 
 \row{ \ch{F_1^2} \n G_2''\n
\test_b \n F_2^1 \n 
 H_2''\n 
\rev \ident{p+22}{p+33}
\n Y\n D_1\n D_2\n  Z \n\biglambda{C}{O-\{a\}}{I\cup\{a\}} } 
\\
%%%%%%%%%%%%%%%%%%%%%%%%%%%%%%%%%%%%%%%%%%%%%%%%%% Gamma 3 S4
&\rp{fork}{c}\path 
 \row{\ch{ \rev \ident{p+3}{p+17} }\n G_2''\n
\test_b \n F_2^1 \n 
 H_2''\n 
\rev \ident{p+22}{p+33}
\n Y\n \ch{D_1}\n D_2\n  Z \n\biglambda{C}{O-\{a\}}{I\cup\{a\}} } 
\\
%%%%%%%%%%%%%%%%%%%%%%%%%%%%%%%%%%%%%%%%%%%%%%%%%% Gamma 3 S4
&\rp{dock}{}\path 
 \row{ \ch{ G_2''}\n
\test_b \n F_2^1 \n 
\ch{ H_2''}\n 
\rev \ident{p+22}{p+33}
\n Y\n \ident{p+1}{p+19} \n D_2\n  Z \n\biglambda{C}{O-\{a\}}{I\cup\{a\}} } 
\\
%%%%%%%%%%%%%%%%%%%%%%%%%%%%%%%%%%%%%%%%%%%%%%%%%% Gamma 3 S4
&\rp{hook}{c}\path 
 \row{  \ch{\test_b} \n F_2^1 \n 
\rev  \ident{p+34}{p+45}\n 
\rev \ident{p+22}{p+33}
\n Y\n \ident{p+1}{p+19} \n D_2\n  Z \n\ch{\biglambda{C}{O-\{a\}}{I\cup\{a\}}} } 
\\
%%%%%%%%%%%%%%%%%%%%%%%%%%%%%%%%%%%%%%%%%%%%%%%%%% Gamma 3 S4
&\rp{literals}{b}\path 
 \row{ \ch{F_2^1} \n 
\rev  \ident{p+34}{p+45}\n 
\rev \ident{p+22}{p+33}
\n Y\n \ident{p+1}{p+19} \n D_2\n  Z \n\biglambda{C}{O-\{a,b\}}{I\cup\{a,b\}} } 
\\
%%%%%%%%%%%%%%%%%%%%%%%%%%%%%%%%%%%%%%%%%%%%%%%%%% Gamma 3 S4
&\rp{fork}{b}\path 
 \row{\ch{\rev \ident{p+46}{p+60}} \n 
\ch{\rev  \ident{p+34}{p+45}}\n 
\ch{\rev \ident{p+22}{p+33}}
\n Y\n \ident{p+1}{p+19} \n \ch{D_2}\n  Z \n\biglambda{C}{O-\{a,b\}}{I\cup\{a,b\}} } 
\\
% %%%%%%%%%%%%%%%%%%%%%%%%%%%%%%%%%%%%%%%%%%%%%%%%%% Gamma 3 S4
% &\path 
%  \row{   \ch{\rev \ident{p+34}{p+45}}\n 
% \rev \ident{p+22}{p+33}
% \n Y\n \ident{p+1}{p+19} \n \ident{p+20}{p+38} \n D_3\n\ch{ D_4}\n Z \n\biglambda{C}{O-\{a,b\}}{I\cup\{a,b\}} } 
% \\
% %%%%%%%%%%%%%%%%%%%%%%%%%%%%%%%%%%%%%%%%%%%%%%%%%% Gamma 3 S4
% &\path 
%  \row{  \ch{\rev \ident{p+22}{p+33}}
% \n Y\n \ident{p+1}{p+19} \n \ident{p+20}{p+38} \n \ch{D_3}\n \ident{p+55}{p+70}\n Z \n\biglambda{C}{O-\{a,b\}}{I\cup\{a,b\}} } 
% \\
%%%%%%%%%%%%%%%%%%%%%%%%%%%%%%%%%%%%%%%%%%%%%%%%%% Gamma 3 S4
&\rp{dock}{}\path 
 \row{   Y\n \ident{p+1}{p+19} \n \ident{p+20}{p+62} 
%\n  \ident{p+39}{p+54}\n \ident{p+55}{p+70}
\n Z \n\biglambda{C}{O-\{a,b\}}{I\cup\{a,b\}} } 
\\
%%%%%%%%%%%%%%%%%%%%%%%%%%%%%%%%%%%%%%%%%%%%%%%%%% Gamma 3 S4
& = 
 \row{   Y\n \ident{p+1}{p+62}\n Z \n\biglambda{C}{O-\{a,b\}}{I\cup\{a,b\}} } 
\\
\end{align*}
\begin{align*}
%%%%%%%%%%%%%%%%%%%%%%%%%%%%%%%%%%%%%%%%%%%%%%%%%% Gamma 3 S5
&S_5 = 
 \row{\ch{ \test_b} \n \puth_2 \n \test_a \n F_2^2 \n 
G_2' \n\rev F_1^2 \n  H_2'\n 
\rev \ident{p+22}{p+33}
\n Y\n D_1\n D_2\n  Z \n\ch{\biglambda{C}{O}{I}} } 
\\
%%%%%%%%%%%%%%%%%%%%%%%%%%%%%%%%%%%%%%%%%%%%%%%%%% Gamma 3 S5 
&\rp{literals}{b}\path 
 \row{ \ch{\puth_2 }\n \test_a \n F_2^2 \n 
\ch{G_2'} \n\rev F_1^2 \n  \ch{H_2'}\n 
\rev \ident{p+22}{p+33}
\n Y\n D_1\n D_2\n  Z \n\biglambda{C}{O-\{a\}}{I\cup\{a\}} } 
\\
%%%%%%%%%%%%%%%%%%%%%%%%%%%%%%%%%%%%%%%%%%%%%%%%%% Gamma 3 S5
&\rp{hook}{b}\path 
 \row{ \ch{F_1^2} \n G_2''\n
\test_a \n F_2^2 \n 
 H_2''\n 
\rev \ident{p+22}{p+33}
\n Y\n D_1\n D_2\n  Z \n\biglambda{C}{O-\{a\}}{I\cup\{a\}} } 
\\
%%%%%%%%%%%%%%%%%%%%%%%%%%%%%%%%%%%%%%%%%%%%%%%%%% Gamma 3 S5
&\rp{fork}{c}\path 
 \row{ \ch{\rev \ident{p+3}{p+17}} \n G_2''\n
\test_a \n F_2^2 \n 
 H_2''\n 
\rev \ident{p+22}{p+33}
\n Y\n\ch{ D_1}\n D_2\n  Z \n\biglambda{C}{O-\{a\}}{I\cup\{a\}} } 
\\
%%%%%%%%%%%%%%%%%%%%%%%%%%%%%%%%%%%%%%%%%%%%%%%%%% Gamma 3 S5
&\rp{dock}{}\path 
 \row{ \ch{ G_2''}\n
\test_a \n F_2^2 \n 
\ch{ H_2''}\n 
\rev \ident{p+22}{p+33}
\n Y\n \ident{p+1}{p+19} \n D_2\n  Z \n\biglambda{C}{O-\{a\}}{I\cup\{a\}} } 
\\
%%%%%%%%%%%%%%%%%%%%%%%%%%%%%%%%%%%%%%%%%%%%%%%%%% Gamma 3 S5
&\rp{hook}{c}\path 
 \row{  \ch{\test_a} \n F_2^2 \n 
\rev  \ident{p+34}{p+45}\n 
\rev \ident{p+22}{p+33}
\n Y\n \ident{p+1}{p+19} \n D_2\n  Z \n\ch{\biglambda{C}{O-\{a\}}{I\cup\{a\}}} } 
\\
%%%%%%%%%%%%%%%%%%%%%%%%%%%%%%%%%%%%%%%%%%%%%%%%%% Gamma 3 S55
&\rp{literals}{b}\path 
 \row{ \ch{F_2^2} \n 
\rev  \ident{p+34}{p+45}\n 
\rev \ident{p+22}{p+33}
\n Y\n \ident{p+1}{p+19} \n D_2\n  Z \n\biglambda{C}{O-\{a,b\}}{I\cup\{a,b\}} } 
\\
%%%The same from here
%%%%%%%%%%%%%%%%%%%%%%%%%%%%%%%%%%%%%%%%%%%%%%%%%% Gamma 3 S5
&\rp{fork}{c}\path 
 \row{\ch{\rev \ident{p+46}{p+60}} \n 
\ch{\rev  \ident{p+34}{p+45}}\n 
\ch{\rev \ident{p+22}{p+33}}
\n Y\n \ident{p+1}{p+19} \n \ch{D_2}\n  Z \n\biglambda{C}{O-\{a,b\}}{I\cup\{a,b\}} } 
\\
%%%%%%%%%%%%%%%%%%%%%%%%%%%%%%%%%%%%%%%%%%%%%%%%%% Gamma 3 S5
% &\path 
%  \row{  \ch{ \rev \ident{p+34}{p+45}}\n 
% \rev \ident{p+22}{p+33}
% \n Y\n \ident{p+1}{p+19} \n \ident{p+20}{p+38} \n D_3\n \ch{D_4}\n Z \n\biglambda{C}{O-\{a,b\}}{I\cup\{a,b\}} } 
% \\
% %%%%%%%%%%%%%%%%%%%%%%%%%%%%%%%%%%%%%%%%%%%%%%%%%% Gamma 3 S5
% &\path 
%  \row{ \ch{ \rev \ident{p+22}{p+33}}
% \n Y\n \ident{p+1}{p+19} \n \ident{p+20}{p+38} \n \ch{D_3}\n \ident{p+55}{p+70}\n Z \n\biglambda{C}{O-\{a,b\}}{I\cup\{a,b\}} } 
% \\
%%%%%%%%%%%%%%%%%%%%%%%%%%%%%%%%%%%%%%%%%%%%%%%%%% Gamma 3 S5
&\rp{dock}{}\path 
 \row{   Y\n \ident{p+1}{p+19} \n \ident{p+20}{p+62} 
%\n  \ident{p+39}{p+54}\n \ident{p+55}{p+70}
\n Z \n\biglambda{C}{O-\{a,b\}}{I\cup\{a,b\}} } 
\\
%%%%%%%%%%%%%%%%%%%%%%%%%%%%%%%%%%%%%%%%%%%%%%%%%% Gamma 3 S5
& = 
 \row{   Y\n \ident{p+1}{p+62}\n Z \n\biglambda{C}{O-\{a,b\}}{I\cup\{a,b\}} } 
\\
\end{align*}

\end{proof}\fi

\subsection{Reduction}

Let $\phi$ be a boolean formula over $l$ variables in conjunctive normal form, such that each clause contains exactly three literals. We write $k$ the number of clauses, $m=3k$ the total number of literals, and $\{\lambda_1,\ldots,\lambda_m\}$ the set of literals. Let $n=31l + 62k + 12m$\fullOnly{ (thus, $n=31l+98k$)}. 

\begin{defn}\label{defn:sphi}
We define the sequence $S_\phi$ as the permutation of $\intvl 1n$ obtained by:
\begin{align*}
& (\key_1,\ldots,\key_m,\test_1,\ldots,\test_m, \Lambda) = \literals{31l+62k}{m} \\
& \mbox{For all } i \in \intvl 1l\\
& \quad P_i= \{j\in \intvl 1m \mid \lambda_j = x_i\} \\
& \quad N_i= \{j\in \intvl 1m \mid \lambda_j = \neg x_i\} \\
& \quad (\nu_i, V_i, D_i) = \var{P_i}{N_i}{31(i-1)} \\
& \mbox{For all } i \in \intvl 1k\\
& \quad (a_i,b_i,c_i)= \mbox{ indices such that the $i$-th clause of $\phi$ is } \lambda_{a_i}\vee \lambda_{b_i}\vee  \lambda_{c_i}\\
& \quad (\gamma_i, \Gamma_i, \Delta_i) = \clause{a_i}{b_i}{c_i}{31l+62(i-1)}\\
& S_\phi=
\row{
\nu_1,\ldots,\nu_l,
\gamma_1,\ldots,\gamma_k,
V_1,\ldots,V_l,
\Gamma_1,\ldots,\Gamma_k,
D_1,\ldots,D_l,
\Delta_1,\ldots,\Delta_k,
\biglambda{x}{\emptyset}{\emptyset}
}
\end{align*}
\end{defn}

Two things should be noted in this definition. First, elements $\key_i$ and $\test_i$ are used in the clause and variable gadgets, although they are not explicitly stated in the parameters (cf. Definitions~\ref{def:var} and~\ref{def:clause}). Second, one could assume that literals are sorted in the formula ($\phi=(\lambda_1\vee\lambda_2\vee\lambda_3)\wedge\dots$), so that $a_i=3i-2$, $b_i=3i-1$ and $c_i=3i$, but it is not necessary since these values are not used in the following.

We now aim at proving Theorem~\ref{thm:big} (p.~\pageref{thm:big}), which states that $S_\phi$ is efficiently sortable if and only if the formula $\phi$ is satisfiable. Several preliminary lemmas are necessary, and the overall process is illustrated in Figure~\ref{fig:general}.

\begin{figure}
 \centering
$ S_\phi=
\row{
\nu_1,\ldots,\nu_l,
\gamma_1,\ldots,\gamma_k,
V_1,\ldots,V_l,
\Gamma_1,\ldots,\Gamma_k,
D_1,\ldots,D_l,
\Delta_1,\ldots,\Delta_k,
\biglambda{x}{\emptyset}{\emptyset}
}$

\includegraphics{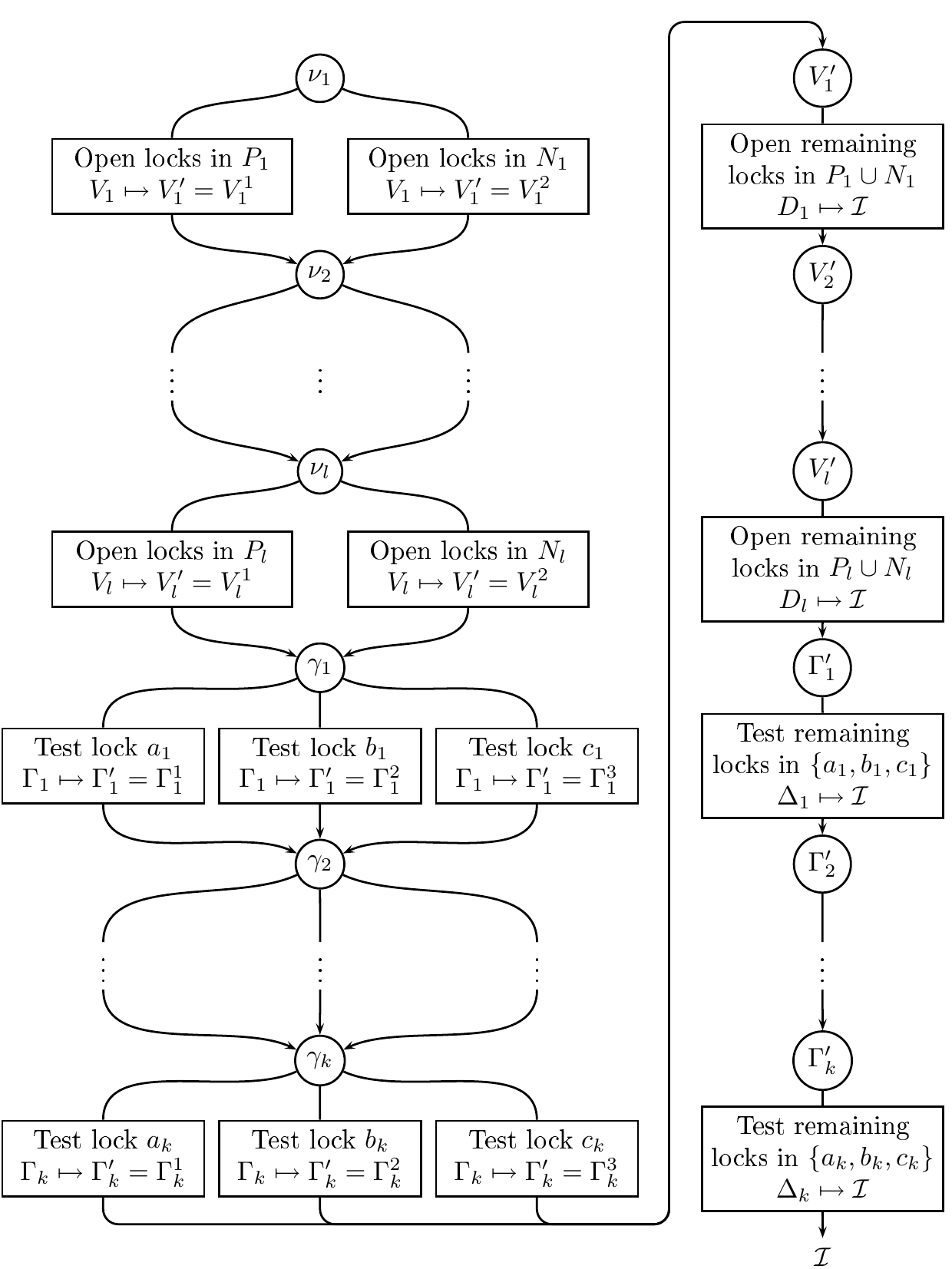}
\caption{\label{fig:general}Description of an efficient sorting of $S_\phi$. Circular nodes correspond to head elements or sequences especially relevant (landmarks). We start with the head element of $S_\phi$: $\nu_1$.
From each landmark, one, two or three paths are possible before reaching the next landmark, each path having its own effects, stated in rectangles, on the sequence. Possible effects are: transforming a subsequence of $S_\phi$ (symbol $\mapsto$), opening a lock, testing a lock (such a path requires the lock to be open).
 %Rectangles represent the modifications applied to the sequence between two landmarks. 
The top-left quarter, from $\nu_1$ to $\nu_l$, is studied in Section~\ref{red:assignment};
the bottom-left quarter, from $\gamma_1$ to $\gamma_k$, is studied in Section~\ref{red:selection}; and the right half, from $V'_1$ to $\Gamma'_k$, is studied in Section~\ref{red:ending}. Indices are removed from identity sequences ($\mathcal I$) for readability.
}
\end{figure}

\subsubsection{Variable assignment}\label{red:assignment}

\ifisFullVersion     %%  FULL
\begin{defn}
Let $r\in \intvl 0l$.
An \emph{$r$-assignment} is a partition $\bigP=(T,F)$ of $\intvl 1r$. An $l$-assignment is called a \emph{full} assignment.
Using notations from Definition~\ref{defn:sphi}, we define the sequence $\SP{\bigP}$ by:

\begin{align*}
&\text{For all }i\in \intvl1r, \quad V_i' = 
\begin{cases} 
V_i^1\text{ if }i\in T\\
V_i^2\text{ if }i\in F\\
\end{cases}\\
&O=\bigcup_{i\in T} P_i \cup \bigcup_{i\in F} N_i\\
%\Lambda'&=\biglambda{xx}{O}{\emptyset}\\
&\SP{\bigP}=
\row{
\nu_{r+1},\ldots,\nu_l,
\gamma_1,\ldots,\gamma_k,
V_1',\ldots,V_r',
V_{r+1},\ldots,V_l,\\
&\phantom{\SP{\bigP}=<}
\Gamma_1,\ldots,\Gamma_k,
D_1,\ldots,D_l,
\Delta_1,\ldots,\Delta_k,
\biglambda{xx}{O}{\emptyset}
}
\end{align*}
\end{defn}

\begin{property}\label{prop:1stepSP}
Let $r\in\intvl 0l$ with $r<l$, $\bigP=(T,F)$ be any $r$-assignment, $\bigP_1=(T\cup\{r+1\}, F)$ and  $\bigP_2=(T, F\cup\{r+1\})$.
Then  
\begin{equation*}
\SP{\bigP} \path
\left\{
   \SP{\bigP_1},
   \SP{\bigP_2}
\right\}
\end{equation*}
\end{property}
\begin{proof}
This is a direct application of Property~\ref{prop:variable}.a on variable $(\nu_{r+1},V_{r+1},D_{r+1})$, using sequences:
\begin{align*}
 X&=\row{
\nu_{r+2},\ldots,\nu_l,
\gamma_1,\ldots,\gamma_k,
V_1',\ldots,V_r'} \\
Y&=\row{
V_{r+2},\ldots,V_l,
\Gamma_1,\ldots,\Gamma_k,
D_1,\ldots,D_l,
\Delta_1,\ldots,\Delta_k}
\end{align*}
\end{proof}

\else          %% SHORT

\begin{defn}
A \emph{full assignment} is a partition $\bigP=(T,F)$ of $\intvl 1l$. 
Using notations from Definition~\ref{defn:sphi}, we define the sequence $\SP{\bigP}$ by:

\begin{align*}
&\text{For all }i\in \intvl1l, \quad V_i' = 
\begin{cases} 
V_i^1\text{ if }i\in T\\
V_i^2\text{ if }i\in F\\
\end{cases}\\
&O=\bigcup_{i\in T} P_i \cup \bigcup_{i\in F} N_i\\
%\Lambda'&=\biglambda{xx}{O}{\emptyset}\\
&\SP{\bigP}=
\row{
\gamma_1,\ldots,\gamma_k,
V_1',\ldots,V_l',
\Gamma_1,\ldots,\Gamma_k,
D_1,\ldots,D_l,
\Delta_1,\ldots,\Delta_k,
\biglambda{xx}{O}{\emptyset}
}
\end{align*}
\end{defn}
\fi             %% FI

With the following lemma, we ensure that any sequence of efficient flips from $S_\phi$ begins with a full assignment of the boolean variables, and every possible assignment can be reached using only efficient flips.

\begin{lemma} \label{lem:fruVar}
\begin{equation*}
 S_\phi \path \left\{ \SP{\bigP} \mid \bigP \text{ full assignment} \right\}
\end{equation*}
\end{lemma}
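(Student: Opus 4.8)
The statement to prove is $S_\phi \path \left\{ \SP{\bigP} \mid \bigP \text{ full assignment} \right\}$. The plan is to proceed by induction on $r\in\intvl 0l$, proving the intermediate claim that $S_\phi \path \left\{ \SP{\bigP} \mid \bigP \text{ is an } r\text{-assignment} \right\}$, and then specialising to $r=l$. The base case $r=0$ is immediate: the unique $0$-assignment is $\bigP=(\emptyset,\emptyset)$, for which $O=\emptyset$, no $V_i'$ is defined, and $\biglambda{xx}{\emptyset}{\emptyset}=\biglambda{x}{\emptyset}{\emptyset}$, so $\SP{(\emptyset,\emptyset)}$ is literally $S_\phi$; hence $S_\phi\path\{S_\phi\}$ holds trivially (condition 1 is reflexive via the empty path, condition 2 is immediate since every path through $S_\phi$ goes through $S_\phi$).

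For the inductive step, suppose $S_\phi\path\{\SP{\bigP}\mid \bigP \text{ an } r\text{-assignment}\}$ for some $r<l$. By Property~\ref{prop:1stepSP}, for each $r$-assignment $\bigP=(T,F)$ we have $\SP{\bigP}\path\{\SP{\bigP_1},\SP{\bigP_2}\}$ where $\bigP_1=(T\cup\{r+1\},F)$ and $\bigP_2=(T,F\cup\{r+1\})$. Each $(r{+}1)$-assignment arises as exactly one such $\bigP_1$ or $\bigP_2$ from a unique $r$-assignment (namely its restriction to $\intvl 1r$). Applying Property~3 (the one stating $S\path\{S_1,S_2\}$, $S_1\path\mathbb T_1$, $S_2\path\mathbb T_2$ $\Rightarrow$ $S\path \mathbb T_1\cup\mathbb T_2$) iteratively — once for each $r$-assignment $\bigP$, feeding in $\mathbb T_1=\{\SP{\bigP_1}\}$ and $\mathbb T_2=\{\SP{\bigP_2}\}$ — and then composing with the inductive hypothesis via the same transitivity property, we obtain $S_\phi\path\{\SP{\bigP}\mid \bigP \text{ an } (r{+}1)\text{-assignment}\}$. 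This closes the induction, and $r=l$ gives exactly the claim, since $l$-assignments are precisely full assignments.

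The one point requiring a little care — and I expect it to be the main obstacle — is the bookkeeping in Property~3's iterated application: one must check that the $(r{+}1)$-assignments obtained from distinct $r$-assignments $\bigP$ are genuinely distinct (so the union over all $\bigP$ is exactly the set of $(r{+}1)$-assignments, with no omissions and no need to worry about a single target being reachable via two different $r$-assignments), and that Property~3 can indeed be chained to combine finitely many branchings. Both are routine: an $(r{+}1)$-assignment determines its restriction to $\intvl 1r$ uniquely, so the map $\bigP\mapsto\{\bigP_1,\bigP_2\}$ partitions the set of $(r{+}1)$-assignments; and Property~3 is symmetric enough to be applied repeatedly (formally, by a trivial sub-induction on the number of $r$-assignments). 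Apart from this, the argument is purely a matter of assembling the already-established gadget properties, with no new combinatorial content about prefix reversals.
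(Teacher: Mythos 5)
Your proposal is correct and follows essentially the same route as the paper: induction on $r$ to prove $S_\phi \path \{\SP{\bigP} \mid \bigP\ r\text{-assignment}\}$, with base case $r=0$ (the unique $0$-assignment giving $S_\phi=\SP{(\emptyset,\emptyset)}$) and inductive step via Property~\ref{prop:1stepSP} combined with the union rule for $\path$. The paper applies the union rule in one line without comment; you are a bit more explicit about the bookkeeping (distinctness of children and iterated application over the set of $r$-assignments), which is sound — just note that the generalization actually needed is $S\path\{A_1,\dots,A_j\}$ and $A_i\path\mathbb T_i$ imply $S\path\bigcup_i\mathbb T_i$ for arbitrary $j$, which is most cleanly seen directly from the definition of $\path$ rather than by feeding singleton sets into the two-element Property~3.
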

\ifisFullVersion\begin{proof}
We prove $S_\phi \path \left\{ \SP{\bigP} \mid \bigP\ r-\text{assignment} \right\}$ by induction for all $r\in\intvl 0l$, and the lemma is deduced from the case $r=l$. 

There is only one 0-assignment, which is $\bigP_0=(\emptyset,\emptyset)$, and $S_\phi=\SP{\bigP_0}$.
Consider now any $r<l$. We use notations $\bigP_1$ and  $\bigP_2$ from Property~\ref{prop:1stepSP}. 
Then any $(r+1)$-assignment can be written $\bigP_1$ or  $\bigP_2$, where $\bigP$ is some $r$-assignment. We have
\begin {align*}
S_\phi & \path \{ \SP{\bigP} \mid  \bigP\ {r}\text{-assignment} \} \text{ by induction hypothesis}\\
S_\phi & \path \{ \SP{\bigP_1},\SP{\bigP_2} \mid \bigP\ {r}\text{-assignment} \} \text{ by Property~\ref{prop:1stepSP}}\\
& = \{  \SP{\bigP'} \mid \bigP'\ {(r+1)}\text{-assignment} \}
\end {align*}
\end{proof}\fi

\subsubsection{Going through clauses}\label{red:selection}
Now that each variable is assigned a boolean value, we need to verify with each clause that this assignment satisfies the formula $\phi$. This is done by selecting, for each clause, a literal which is true, and testing the corresponding lock.
As in Definition \ref{defn:sphi}, for any $i\in\intvl1k$ we write $(a_i,b_i,c_i)$ the indices such that the $i$-th clause of $\phi$ is $\lambda_{a_i}\vee \lambda_{b_i}\vee  \lambda_{c_i}$ (thus, $a_i,b_i,c_i \in \intvl1m$).
\ifisFullVersion
\begin{defn}\label{defn:sps}
Let $t\in \intvl 0k$ and \bigP\ be a full assignment.
A \emph{$t$-selection} \sel\ is a subset of $\intvl1m$ such that
\begin{itemize}
\item $|\sel|=t$
\item for each $i\in\intvl 1t$, $|\{a_i,b_i,c_i\}\cap \sel|=1$
\end{itemize}

A $t$-selection $\sel$ and a full assignment $\bigP=(T,F)$ are \emph{compatible}, if, for every $i\in\sel$, literal $\lambda_i$ is true according to assignment $\bigP$ (that is, $\lambda_i=x_j$ and $j\in T$, or $\lambda_i=\neg x_j$ and $j\in F$).

A $k$-selection is called a \emph{full} selection.
Given a $t$-selection $\sel$ and a full assignment $\bigP=(T,F)$ which are compatible, we define the sequence $\SPs{\bigP}{\sel}$ by:
\begin{align*}
&\text{For all }i\in \intvl1l, \quad V_i' = 
\begin{cases} 
V_i^1\text{ if }i\in T\\
V_i^2\text{ if }i\in F\\
\end{cases}\\
&\text{For all }i\in \intvl1t, \quad \Gamma_i' = 
\begin{cases} 
\Gamma_i^1\text{ if }a_i\in \sel\\
\Gamma_i^2\text{ if }b_i\in \sel\\
\Gamma_i^3\text{ if }c_i\in \sel
\end{cases}\\
&O=\bigcup_{i\in T} P_i \cup \bigcup_{i\in F} N_i-\sel\\
&I=\sel \\
%\Lambda'&=\biglambda{xx}{O}{\emptyset}\\
&\SPs{\bigP}{\sel}=
\row{
\gamma_{t+1},\ldots,\gamma_k,
V_1',\ldots,V_l',
%\\&
\Gamma_1',\ldots,\Gamma_{t}',
\Gamma_{t+1},\ldots,\Gamma_k,
D_1,\ldots,D_l,
\Delta_1,\ldots,\Delta_k,
\biglambda{xx}{O}{I}
}
\end{align*}
\end{defn}

\begin{property} \label{prop:SPsred}
Let $\bigP$ be a full assignment and $t\in \intvl0{k}$, $t<k$. 
Let $\sel'$ be a $(t+1)$-selection compatible with $\bigP$, then there exists a $t$-selection $\sel$ compatible with $\bigP$ such that $\sel\subset \sel'$.
\end{property}
\begin{proof}
 It is obtained by $\sel=\sel'-\{a_{t+1},b_{t+1},c_{t+1}\}$. It is trivially  a $t$-selection included in $\sel$, and it is compatible with $\bigP$ (all selected literals in $\sel$ are also selected in $\sel'$, and thus are true according to $\bigP$).
\end{proof}
\iffalse
\begin{property}
 Let $0\leq r<r'\leq k$, and $\bigP$ be a full assignment. 
If $\sel$ and $\sel'$ are respectively an $r$- and $r'$-selection, such that $\sel\subset \sel'$ and $\sel'$ is compatible with $\bigP$, then $\sel$ is also compatible with $\bigP$.
\end{property}

\fi
\begin{property}\label{prop:1stepSPs}
 Let $t\in\intvl0k$, $t<k$, $\bigP$ be a full assignment, and $\sel$ be a $t$-selection compatible with $\bigP$.
\begin{equation*}
\SPs{\bigP}{\sel} \path
\left\{  \SPs{\bigP}{\sel'}\mid
   \sel'\ (t+1)\text{-selection compatible with }\bigP; \sel\subset\sel'
\right\}
\end{equation*}
Note that the right-hand side can be the empty set, in which case $\SPs{\bigP}{\sel}\pathDead$.
\end{property}

\begin{proof}
First note that there are 3 $(t+1)$-selections such that $\sel\subset\sel'$, and they are 
$\sel'_1=\sel\cup\{a_{t+1}\}$, $\sel'_2=\sel\cup\{b_{t+1}\}$, and $\sel'_3=\sel\cup\{c_{t+1}\}$.
Since $\sel$ is compatible with $\bigP$, $\sel'_1$ is compatible with $\bigP$ iff literal $\lambda_{a_{t+1}}$ is true in $\bigP$ (and similarly with  couples $(\sel'_2, \lambda_{b_{t+1}})$ and $(\sel'_3, \lambda_{c_{t+1}})$). 
We now define sequences $X$ and $Y$ and sets $I$ and $O$ such that $\SPs{\bigP}{\sel}=\row{ \gamma_{t+1} \n X \n \Gamma_{t+1} \n Y  \n\biglambda{C}{O}{I} }$, that is:
\begin{align*}
&X=\row{\gamma_{t+2},\ldots,\gamma_k,V_1',\ldots,V_l',\Gamma_1',\ldots,\Gamma_{t}'} \\
&Y=\row{\Gamma_{t+2},\ldots,\Gamma_k,
D_1,\ldots,D_l,\Delta_1,\ldots,\Delta_k,}\\
&O=\bigcup_{i\in T} P_i \cup \bigcup_{i\in F} N_i-\sel\\
&I=\sel 
\end{align*}

Using Property~\ref{prop:clause} on clause gadget $(\gamma_{t+1},\Gamma_{t+1},\Delta_{t+1})$, we obtain:
\begin{equation*}
\SPs{\bigP}{\sel}\path\mathbb T
 \end{equation*}
where $\mathbb T$ is defined by:
\begin{eqnarray*}
\row{  X \n \Gamma^1_{t+1}\n Y   \n \biglambda{C}{O-\{a_{t+1}\}}{I\cup\{a_{t+1}\}} } \in \mathbb T &\mbox{ iff }& a_{t+1}\in O
\\
\row{  X \n \Gamma^2_{t+1}\n Y  \n \biglambda{C}{O-\{b_{t+1}\}}{I\cup\{b_{t+1}\}} } \in \mathbb T &\mbox{ iff }& b_{t+1}\in O
\\
\row{ X \n \Gamma^3_{t+1}\n Y   \n \biglambda{C}{O-\{c_{t+1}\}}{I\cup\{c_{t+1}\}} } \in \mathbb T &\mbox{ iff }& c_{t+1}\in O
\end{eqnarray*}

Note that $a_{t+1}\notin\sel$, hence $a_{t+1}\in O$ iff $\exists i\in T$ s.t. $a_{t+1}\in P_i$ or $\exists i\in F$ s.t. $a_{t+1}\in N_i$. Equivalently,  $a_{t+1}\in O$ iff $\lambda_{a_{t+1}}$ is a positive occurrence of a variable assigned True in $\bigP$, or a negative occurrence of a variable assigned False in $\bigP$. Finally, $a_{t+1}\in O$ iff $\sel'_1$ is compatible with $\bigP$. Likewise, $b_{t+1}\in O$ iff $\sel'_2$ is compatible with $\bigP$, and $c_{t+1}\in O$ iff $\sel'_3$ is compatible with $\bigP$.

\begin{eqnarray*}
\SPs{\bigP}{\sel'_1}=\row{  X \n \Gamma^1_{t+1}\n Y \n \biglambda{C}{O-\{a_{t+1}\}}{I\cup\{a_{t+1}\}} } \in \mathbb T &\mbox{ iff }&\sel'_1 \text{ is compatible with } \bigP 
\\
\SPs{\bigP}{\sel'_2}=\row{  X \n \Gamma^2_{t+1}\n Y \n \biglambda{C}{O-\{b_{t+1}\}}{I\cup\{b_{t+1}\}} } \in \mathbb T &\mbox{ iff }&\sel'_2 \text{ is compatible with } \bigP 
\\
\SPs{\bigP}{\sel'_3}=\row{  X \n \Gamma^3_{t+1}\n Y \n \biglambda{C}{O-\{c_{t+1}\}}{I\cup\{a_{t+1}\}} } \in \mathbb T &\mbox{ iff }&\sel'_3 \text{ is compatible with } \bigP 
\end{eqnarray*}

Thus $\mathbb T$ is indeed the set of sequences \SPs{\bigP}{\sel'} where $\sel'$ is a $(t+1)$-selection which contains $\sel$ and is compatible with $\bigP$: the property is proved. 
\end{proof}
\else

\begin{defn}\label{defn:sps}
Let \bigP\ be a full assignment.
A \emph{full selection} \sel\ is a subset of $\intvl1m$ such that, for each $i\in\intvl 1k$, $|\{a_i,b_i,c_i\}\cap \sel|=1$ (hence $|\sel|=k$).
A full selection $\sel$ and a full assignment $\bigP=(T,F)$ are \emph{compatible}, if, for every $i\in\sel$, literal $\lambda_i$ is true according to assignment $\bigP$ (that is, $\lambda_i=x_j$ and $j\in T$, or $\lambda_i=\neg x_j$ and $j\in F$).
Given a full selection $\sel$ and a full assignment $\bigP=(T,F)$ which are compatible, we define the sequence $\SPs{\bigP}{\sel}$ by:
\begin{align*}
&\text{For all }i\in \intvl1l, \quad V_i' = 
\begin{cases} 
V_i^1\text{ if }i\in T\\
V_i^2\text{ if }i\in F\\
\end{cases}\\
&\text{For all }i\in \intvl1k, \quad \Gamma_i' = 
\begin{cases} 
\Gamma_i^1\text{ if }a_i\in \sel\\
\Gamma_i^2\text{ if }b_i\in \sel\\
\Gamma_i^3\text{ if }c_i\in \sel
\end{cases}\\
&O=\bigcup_{i\in T} P_i \cup \bigcup_{i\in F} N_i-\sel\\
&I=\sel \\
%\Lambda'&=\biglambda{xx}{O}{\emptyset}\\
&\SPs{\bigP}{\sel}=
\row{V_1',\ldots,V_l',
%\\&
\Gamma_1',\ldots,\Gamma_{k}',
D_1,\ldots,D_l,
\Delta_1,\ldots,\Delta_k,
\biglambda{xx}{O}{I}
}
\end{align*}
\end{defn}
\fi
With the following lemma, we ensure that after the truth assignment, every efficient path starting from $S_\phi$ 
needs to select a literal in each clause, under the constraint that the selection is compatible with the assignment.

\begin{lemma} \label{lem:fruClause}
Let $\bigP$ be a full assignment. Then
\begin{equation*}
 \SP{\bigP} \path \left\{ \SPs{\bigP}{\sel} \mid \sel \text{ full selection compatible with }\bigP  \right\}
\end{equation*}

\end{lemma}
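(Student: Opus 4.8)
The plan is to mimic, almost verbatim, the proof of Lemma~\ref{lem:fruVar}. I would prove by induction on $t\in\intvl 0k$ the more general statement
\[
\SP{\bigP} \path \left\{ \SPs{\bigP}{\sel} \mid \sel\ t\text{-selection compatible with }\bigP \right\},
\]
and then read off the lemma from the case $t=k$, recalling that a full selection is exactly a $k$-selection compatible with $\bigP$.

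For the base case $t=0$, I would note that the only $0$-selection is the empty set, which is vacuously compatible with $\bigP$, and that unwinding Definition~\ref{defn:sps} with $\sel=\emptyset$ gives $I=\emptyset$, $O=\bigcup_{i\in T}P_i\cup\bigcup_{i\in F}N_i$ and $\Gamma_i'=\Gamma_i$ for every $i$, so that $\SPs{\bigP}{\emptyset}=\SP{\bigP}$; hence $\SP{\bigP}\path\{\SPs{\bigP}{\emptyset}\}$ holds trivially. For the inductive step, assuming the claim for some $t<k$, I would apply Property~\ref{prop:1stepSPs} to each $t$-selection $\sel$ compatible with $\bigP$, obtaining $\SPs{\bigP}{\sel}\path\{\SPs{\bigP}{\sel'}\mid \sel'\ (t+1)\text{-selection compatible with }\bigP,\ \sel\subset\sel'\}$ (possibly with empty right-hand side, i.e.\ $\SPs{\bigP}{\sel}\pathDead$), and then splice these relations onto the induction hypothesis using the composition rule for $\path$ (if $S\path\{S_1,S_2\}$, $S_1\path\mathbb T_1$ and $S_2\path\mathbb T_2$ then $S\path\mathbb T_1\cup\mathbb T_2$), extended to an arbitrary finite family by an immediate induction on the number of terms. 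This yields
\[
\SP{\bigP}\path\bigcup_{\sel}\left\{\SPs{\bigP}{\sel'}\mid \sel'\ (t+1)\text{-selection compatible with }\bigP,\ \sel\subset\sel'\right\},
\]
the union ranging over all $t$-selections $\sel$ compatible with $\bigP$. The final move is to identify this union with $\{\SPs{\bigP}{\sel'}\mid \sel'\ (t+1)\text{-selection compatible with }\bigP\}$: one inclusion is immediate, and the reverse inclusion is precisely where Property~\ref{prop:SPsred} enters, since it supplies, for any compatible $(t+1)$-selection $\sel'$, a compatible $t$-selection $\sel\subset\sel'$ whose associated term of the union already contains $\SPs{\bigP}{\sel'}$.

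I do not expect a genuine obstacle here: the argument is a purely formal manipulation of the $\path$ relation, structurally identical to Lemma~\ref{lem:fruVar}. The only points demanding care are bookkeeping ones: checking the base-case identity $\SPs{\bigP}{\emptyset}=\SP{\bigP}$ so that the induction actually starts from $\SP{\bigP}$; writing down the finite-union form of the composition rule; and, conceptually the key step, invoking Property~\ref{prop:SPsred} to guarantee that no compatible full selection is missed while Property~\ref{prop:1stepSPs} guarantees that no incompatible one is introduced — with the deadlock case $\SPs{\bigP}{\sel}\pathDead$ contributing an empty term that is absorbed correctly by the union.
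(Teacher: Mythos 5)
Your proposal is correct and follows essentially the same route as the paper: induction on $t\in\intvl 0k$, base case $\SPs{\bigP}{\emptyset}=\SP{\bigP}$, inductive step combining Property~\ref{prop:1stepSPs} with the finite-union composition rule for $\path$, and Property~\ref{prop:SPsred} to recognize the resulting union as the set of all compatible $(t+1)$-selections. The paper states this more tersely but the underlying argument, including the handling of the deadlock/empty-set case, is the one you describe.
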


\ifisFullVersion\begin{proof}
The proof follows the same pattern as the one of Lemma~\ref{lem:fruVar}, that is, 
we prove 
\begin{equation*}
 \SP{\bigP} \path \left\{ \SPs{\bigP}{\sel} \mid \sel\ t\text{-selection compatible with }\bigP  \right\}
\end{equation*}
 by induction for all $t\in\intvl 0k$, and the lemma is deduced from the case $t=k$. 

There is only one 0-selection, which is $\sel_0=\emptyset$, it is compatible with $\bigP$, and $\SP{\bigP}=\SPs{\bigP}{\sel_0}$.
Consider now any $t<k$.  We have
\begin{equation*}
\SP{\bigP}  \path \{ \SPs{\bigP}{\sel} \mid  \sel\ t\text{-selection compatible with }\bigP \} \text{ (by induction hypothesis)}
\end{equation*}
\begin{equation*}
\begin{split}
\SP{\bigP}  \path \{\SPs{\bigP}{\sel'} \mid {}&\sel'\ (t+1)\text{-selection compatible with }\bigP \text{ and }\\
 &\exists \,\sel\ t\text{-selection compatible with }\bigP,\, \sel\subset\sel'\}
 \text{ by Property~\ref{prop:1stepSPs}}
\end{split}
\end{equation*}\begin{equation*}
 = \{  \SPs{\bigP}{\sel'} \mid \sel'\ (t+1)\text{-selection compatible with }\bigP \} \text{ by Property~\ref{prop:SPsred}}
\end{equation*}
\end{proof}\fi

\subsubsection{Beyond clauses}\label{red:ending}
\begin{lemma}\label{lem:finish}
 Let $\bigP$ be a full assignment and $\sel$ be a full selection, such that $\bigP$ and $\sel$ are compatible (provided such a pair exists for $\phi$). Then
\begin{equation*}
 \SPs{\bigP}{\sel} \path \ident{1}{n}
\end{equation*}

\end{lemma}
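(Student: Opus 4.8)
The plan is to drive $\SPs{\bigP}{\sel}$ to the identity by consuming the gadgets from left to right: first the $l$ variable gadgets, then the $k$ clause gadgets, after which only a fully sorted suffix remains. Write $\bigP=(T,F)$, so that $\SPs{\bigP}{\sel}=\row{V_1',\ldots,V_l',\Gamma_1',\ldots,\Gamma_k',D_1,\ldots,D_l,\Delta_1,\ldots,\Delta_k,\biglambda{C}{O_0}{I_0}}$ with $O_0=\bigcup_{i\in T}P_i\cup\bigcup_{i\in F}N_i-\sel$ and $I_0=\sel$. Throughout, I keep track of the two subsets $O,I\subseteq\intvl 1m$ occurring in the trailing $\biglambda{C}{O}{I}$ block, using three elementary facts: the sets $\{P_i,N_i\}_{1\le i\le l}$ partition $\intvl 1m$ (each literal $\lambda_j$ is $x_i$ or $\neg x_i$ for a unique $i$); the sets $\{a_i,b_i,c_i\}_{1\le i\le k}$ partition $\intvl 1m$ (each literal belongs to exactly one clause); and $\sel\subseteq\bigcup_{i\in T}P_i\cup\bigcup_{i\in F}N_i$, since $\sel$ and $\bigP$ are compatible. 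Everything then reduces to chaining applications of Property~\ref{prop:variable} and Property~\ref{prop:clause2} by transitivity of $\path$.

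First I would prove, by induction on $r\in\intvl 0l$, that $\SPs{\bigP}{\sel}\path\row{V_{r+1}',\ldots,V_l',\Gamma_1',\ldots,\Gamma_k',\ident 1{31r},D_{r+1},\ldots,D_l,\Delta_1,\ldots,\Delta_k,\biglambda{C}{O_r}{I_0}}$, where $O_r=O_0\cup\bigcup_{i\in T,\,i\le r}N_i\cup\bigcup_{i\in F,\,i\le r}P_i$. The base case $r=0$ is the definition of $\SPs{\bigP}{\sel}$. For the inductive step, the head element is the head of $V_{r+1}'$, which is $V_{r+1}^1$ when $r+1\in T$ and $V_{r+1}^2$ when $r+1\in F$; accordingly one applies Property~\ref{prop:variable}.b or Property~\ref{prop:variable}.c to the gadget $(\nu_{r+1},V_{r+1},D_{r+1})=\var{P_{r+1}}{N_{r+1}}{31r}$, taking $X=\row{V_{r+2}',\ldots,V_l',\Gamma_1',\ldots,\Gamma_k',\ident 1{31r}}$, the dock $D_{r+1}$, and $Y=\row{D_{r+2},\ldots,D_l,\Delta_1,\ldots,\Delta_k}$ (the lock-set side conditions are automatic, every sequence in sight being a permutation). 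This turns $V_{r+1}'$ and $D_{r+1}$ into the block $\ident{31r+1}{31(r+1)}$, which lands immediately after $\ident 1{31r}$ and so equals $\ident 1{31(r+1)}$ together with it, while adjoining $N_{r+1}$ (resp.\ $P_{r+1}$) to $O$. At $r=l$ this gives $\SPs{\bigP}{\sel}\path\row{\Gamma_1',\ldots,\Gamma_k',\ident 1{31l},\Delta_1,\ldots,\Delta_k,\biglambda{C}{\intvl 1m-\sel}{\sel}}$: indeed $O_l=\bigl(\bigcup_{i\in T}P_i\cup\bigcup_{i\in F}N_i-\sel\bigr)\cup\bigcup_{i\in T}N_i\cup\bigcup_{i\in F}P_i$, and since $\sel$ lies in the first of these unions, which is disjoint from the second, $O_l=\bigl(\bigcup_{1\le i\le l}(P_i\cup N_i)\bigr)-\sel=\intvl 1m-\sel$.

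Next I would run an analogous induction on $t\in\intvl 0k$ over the clause gadgets, showing $\SPs{\bigP}{\sel}\path\row{\Gamma_{t+1}',\ldots,\Gamma_k',\ident 1{31l+62t},\Delta_{t+1},\ldots,\Delta_k,\biglambda{C}{(\intvl 1m-\sel)-J_t}{\sel\cup J_t}}$, where $J_t=\bigcup_{1\le i\le t}(\{a_i,b_i,c_i\}-\sel)$. The step applies Property~\ref{prop:clause2}.a, .b or~.c according to which of $a_{t+1},b_{t+1},c_{t+1}$ lies in $\sel$ --- this is exactly the case split defining $\Gamma_{t+1}'$ --- to the gadget $(\gamma_{t+1},\Gamma_{t+1},\Delta_{t+1})=\clause{a_{t+1}}{b_{t+1}}{c_{t+1}}{31l+62t}$, with $Y=\row{\Gamma_{t+2}',\ldots,\Gamma_k',\ident 1{31l+62t}}$, $\Delta=\Delta_{t+1}$ and $Z=\row{\Delta_{t+2},\ldots,\Delta_k}$. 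The hypothesis of that property --- that the two non-selected indices of clause $t+1$ belong to $O$ --- holds because those indices are not in $\sel$ (the selection takes exactly one index per clause) and not in $J_t$ (clauses have pairwise disjoint index sets), hence lie in $(\intvl 1m-\sel)-J_t$. As before, the new block $\ident{31l+62t+1}{31l+62(t+1)}$ merges with $\ident 1{31l+62t}$, and the two non-selected indices move from $O$ to $I$, i.e.\ $J_t$ grows to $J_{t+1}$. At $t=k$ we have $J_k=\bigl(\bigcup_{1\le i\le k}\{a_i,b_i,c_i\}\bigr)-\sel=\intvl 1m-\sel$, so $O$ becomes $\emptyset$ and $I$ becomes $\intvl 1m$, yielding $\SPs{\bigP}{\sel}\path\row{\ident 1{31l+62k},\biglambda{C}{\emptyset}{\intvl 1m}}$.

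Finally, I would observe that $\biglambda{C}{\emptyset}{\intvl 1m}$ is exactly $\biglambda{\emptyset}{\emptyset}{\intvl 1m}=\ident{31l+62k+1}{31l+62k+12m}$ for the gadget $\literals{31l+62k}{m}$, so the last sequence equals $\ident 1{31l+62k+12m}=\ident 1n$, and hence $\SPs{\bigP}{\sel}\path\ident 1n$. I do not expect any single deduction to be the hard part: each one is one invocation of Property~\ref{prop:variable} or Property~\ref{prop:clause2}. The real work, and the main source of possible slips, is the bookkeeping --- maintaining an exact description of the intermediate sequences, checking that the successive identity intervals abut (which is precisely what the offsets $31(i-1)$, $31l+62(i-1)$ and $31l+62k$ of Definition~\ref{defn:sphi} ensure), and verifying at every variable and clause step that the lock-set precondition of the invoked property is met, which is where the two partition facts and the compatibility of $\bigP$ and $\sel$ come into play.
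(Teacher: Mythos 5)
Your proof is correct and follows essentially the same two-phase strategy as the paper: an induction over the $l$ variable gadgets invoking Property~\ref{prop:variable}.b/.c, tracking the accumulating identity prefix and the growth of $O$ to $\intvl 1m-\sel$, followed by an induction over the $k$ clause gadgets invoking Property~\ref{prop:clause2}.a/.b/.c with matching bookkeeping of $O$ and $I$ (your $(\intvl 1m-\sel)-J_t$ and $\sel\cup J_t$ coincide with the paper's $O'_t$ and $I'_t$). The only cosmetic difference is that the paper also notes at the outset, before either induction, that the preconditions hold because one may first take $O_r$ to include both $P_i$ and $N_i$ for $i\le r$, which is the same observation you make inline when justifying each inductive step.
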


\ifisFullVersion\begin{proof} Write $\bigP=(T,F)$.
 Since $\sel$ is a full selection,  \SPs{\bigP}{\sel} can be written (see Definition~\ref{defn:sps}):
\begin{align*}
&\text{For all }i\in \intvl1l, \quad V_i' = 
\begin{cases} 
V_i^1\text{ if }i\in T\\
V_i^2\text{ if }i\in F\\
\end{cases}\\
&\text{For all }i\in \intvl1k, \quad \Gamma_i' = 
\begin{cases} 
\Gamma_i^1\text{ if }a_i\in \sel\\
\Gamma_i^2\text{ if }b_i\in \sel\\
\Gamma_i^3\text{ if }c_i\in \sel
\end{cases}\\
&O=\bigcup_{i\in T} P_i \cup \bigcup_{i\in F} N_i-\sel\\
&I=\sel \\
&\SPs{\bigP}{\sel}=\row{V_1',\ldots,V_l',\Gamma_1',\ldots,\Gamma_{k}',
D_1,\ldots,D_l,\Delta_1,\ldots,\Delta_k,\biglambda{xx}{O}{I}
}
\end{align*}
We extend the definition of set $O$ to $O_r$, for any $r\in\intvl0l$, as follows:
\begin{equation*}
 O_r=\bigcup_{0<i\leq r} (P_i\cup N_i)\cup \bigcup_{i\in T} P_i \cup \bigcup_{i\in F} N_i-\sel\\
\end{equation*}
Note that $O_0=O$, and that $O_l=\intvl1m-\sel$.

\begin{align*}
  \SPs{\bigP}{\sel} &=\row{\ch{V_1'},\ldots,V_l',\Gamma_1',\ldots,\Gamma_{k}',
\ch{D_1},\ldots,D_l,\Delta_1,\ldots,\Delta_k,\ch{\biglambda{xx}{O_0}{I}}
 } 
\\ 
&\rp{variable}{b/c}\path
 \row{\ch{V_2'},\ldots,V_l',\Gamma_1',\ldots,\Gamma_{k}',
\ident{1}{31},\ch{D_2}\ldots,D_l,\Delta_1,\ldots,\Delta_k,\ch{\biglambda{xx}{O_1}{I}}
 } 
\\
&\rp{variable}{b/c}\path
 \row{\ch{V_3'},\ldots,V_l',\Gamma_1',\ldots,\Gamma_{k}',
\ident{1}{31},\ident{32}{62},\ch{D_3}\ldots,D_l,\Delta_1,\ldots,\Delta_k,\ch{\biglambda{xx}{O_2}{I}}
 } 
\\
&\cdots
\\
&\rp{variable}{b/c}\path
 \row{\Gamma_1',\ldots,\Gamma_{k}',
\ch{\ident{1}{31},\ident{32}{62},\ldots,\ident{31l-30}{31l}},\Delta_1,\ldots,\Delta_k,\biglambda{xx}{O_l}{I}
 } 
\\
&=
 \row{\Gamma_1',\ldots,\Gamma_{k}',
\ident{1}{31l},\Delta_1,\ldots,\Delta_k,\biglambda{xx}{O_l}{I}
 } 
\end{align*}

Finally, for the last part, we use a similar procedure, with the following sets, for $t\in\intvl0k$:
\begin{align*}
 &O'_t=\intvl1m-\left(\sel\cup\bigcup_{0<i\le t} \{a_i,b_i,c_i\}\right)\\
 &I'_t=\sel\cup\bigcup_{0<i\le t} \{a_i,b_i,c_i\}
\end{align*}

Note that $O'_0=O_l$, $I'_0=I$, $O'_k=\emptyset$, $I'_k=\intvl1m$, and more importantly, for $i>t$, assuming that $a_i\in \sel$ (cases $b_i\in \sel$ and $c_i\in\sel$ are similar), then $a_i\in I'_t$, $b_i\in O'_t$ and $c_i\in O'_t$. Hence we can successively apply Property~\ref{prop:clause2} (either .a, .b or .c) on each clause gadgets.

\begin{align*}
&\row{\ch{\Gamma_1'},\ldots,\Gamma_{k}',\ident{1}{31l},\ch{\Delta_1},\ldots,\Delta_k,\ch{\biglambda{xx}{O'_0}{I'_0}}}
 \\
&\rp{clause2}{}\path
 \row{\ch{\Gamma_2'},\ldots,\Gamma_{k}',
\ident{1}{31l},\ident{31l+1}{31l+62},\ch{\Delta_2},\ldots,\Delta_k,\ch{\biglambda{xx}{O'_1}{I'_1}}
 } 
\\
&\rp{clause2}{}\path
 \row{\ch{\Gamma_3'},\ldots,\Gamma_{k}',
\ident{1}{31l},\ident{31l+1}{31l+62},\ident{31l+63}{31l+124},\ch{\Delta_3},\ldots,\Delta_k,\ch{\biglambda{xx}{O'_2}{I'_2}}
 } 
\\
&\cdots
\\
&\rp{clause2}{}\path
 \row{\ident{1}{31l},\ch{\ident{31l+1}{31l+62},\ident{31l+63}{31l+124},\ldots,\ident{31l+62k-61}{31l+62k}},\biglambda{xx}{O'_k}{I'_k}
 } 
\\
&= \row{\ident{1}{31l},\ident{31l+1}{31l+62k},\biglambda{xx}{\emptyset}{\intvl1m}}
\\
&= \row{\ident{1}{31l},\ident{31l+1}{31l+62k},\ident{31l+62k+1}{31l+62k+12m}}
\\
&= \ident{1}{n} 
\end{align*}
\end{proof}\fi

\begin{thm}\label{thm:big}
 \begin{equation*}
  S_\phi \path \ident{1}{n} \mbox{ iff } \phi \mbox{ is satisfiable.}
 \end{equation*}
\end{thm}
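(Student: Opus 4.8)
The plan is to glue the three preceding lemmas together along efficient paths and then read off the equivalence from the semantics of $\path$. First I would upgrade the composition property of $\path$ --- if $S\path\{S_1,S_2\}$, $S_1\path\mathbb T_1$ and $S_2\path\mathbb T_2$ then $S\path\mathbb T_1\cup\mathbb T_2$ --- to arbitrary finite families by an immediate induction: if $S\path\{S_1,\ldots,S_j\}$ and $S_i\path\mathbb T_i$ for every $i$, then $S\path\bigcup_i\mathbb T_i$. This is needed because the set of full assignments, and for each assignment the set of compatible full selections, generally has more than two elements.

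Then I would chain Lemma~\ref{lem:fruVar} with Lemma~\ref{lem:fruClause}. The former gives $S_\phi\path\{\SP{\bigP}\mid\bigP\text{ a full assignment}\}$; applying the latter to each $\SP{\bigP}$ and invoking the finite-family composition property yields
\begin{equation*}
S_\phi\path\mathbb T,\qquad\mathbb T=\bigl\{\,\SPs{\bigP}{\sel}\ \big|\ \bigP\text{ a full assignment},\ \sel\text{ a full selection compatible with }\bigP\,\bigr\}.
\end{equation*}

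Next I would establish the elementary combinatorial fact that $\mathbb T\neq\emptyset$ if and only if $\phi$ is satisfiable. A full assignment $\bigP=(T,F)$ is precisely a truth assignment of the $l$ variables (declare $x_i$ true exactly when $i\in T$), and a full selection compatible with $\bigP$ exists exactly when every clause contains at least one literal true under $\bigP$, i.e.\ exactly when $\bigP$ satisfies $\phi$: given such a $\bigP$, choose in the $i$-th clause one index among $\{a_i,b_i,c_i\}$ whose literal is true and collect these $k$ indices into $\sel$; conversely a compatible $\sel$ exhibits one true literal per clause.

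Finally I would conclude by two cases using the facts already recalled. If $\phi$ is satisfiable, fix a compatible pair $(\bigP,\sel)$ so that $\SPs{\bigP}{\sel}\in\mathbb T$; the first condition in the definition of $S_\phi\path\mathbb T$ provides an efficient path from $S_\phi$ to $\SPs{\bigP}{\sel}$, Lemma~\ref{lem:finish} provides an efficient path from $\SPs{\bigP}{\sel}$ to $\ident1n$, and their concatenation shows $S_\phi$ is efficiently sortable, i.e.\ $S_\phi\path\ident1n$. If $\phi$ is unsatisfiable, then $\mathbb T=\emptyset$, hence $S_\phi\path\emptyset$, which by the property that $S\path\emptyset$ holds iff $S$ is not efficiently sortable gives $S_\phi\not\path\ident1n$. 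I expect no real obstacle here: the genuine work is buried in the gadget properties and in Lemmas~\ref{lem:fruVar}, \ref{lem:fruClause} and \ref{lem:finish}, so the only things to watch are the legitimacy of the finite-union induction and the clean formulation of ``a compatible full selection exists $\Leftrightarrow$ $\phi$ is satisfiable''.
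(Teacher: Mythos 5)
Your proof is correct, and it rests on the same three lemmas and the same SAT correspondence as the paper's, but it packages the logic a bit more cleanly. The paper proves the forward direction by assuming an efficient path from $S_\phi$ to $\ident 1n$ exists and then extracting, in two steps, a full assignment $\bigP$ and a compatible full selection $\sel$ that this particular path must pass through (which requires the unstated observation that the tail of an efficient path is itself an efficient path, so that $\SP{\bigP}\path\ident 1n$ and then $\SPs{\bigP}{\sel}\path\ident 1n$). You instead compose the two lemmas up front to get $S_\phi\path\mathbb T$ where $\mathbb T$ ranges over all compatible pairs, and then handle the forward direction by the characterization $S\path\emptyset\Leftrightarrow S$ not efficiently sortable; this avoids any reasoning about path tails and makes the ``$\mathbb T=\emptyset$ iff $\phi$ unsatisfiable'' pivot explicit, which is a modest but genuine simplification. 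One small caveat: the step that generalizes the composition property from $S\path\{S_1,S_2\}$ to $S\path\{S_1,\dots,S_j\}$ is not really an \emph{induction} on the stated pairwise form (there is no natural way to peel $\{S_1,\dots,S_j\}$ into a pair to which the stated property applies); what you want is a direct re-run of the same two-condition verification for a finite family, which is just as short. The paper itself silently uses this finite-family form inside the inductive steps of Lemmas~\ref{lem:fruVar} and~\ref{lem:fruClause}, so you are right that it needs to be available --- just state and prove it directly rather than appealing to an induction that does not quite go through.
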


\begin{proof}
 Assume first that $S_\phi \path \ident{1}{n}$.
 By Lemma~\ref{lem:fruVar}, since $S_\phi\path  \left\{ \SP{\bigP} \mid \bigP \text{ full assignment} \right\} $, there exists a full assignment $\bigP=(T,F)$ such that the path from $S_\phi$ to the identity uses $\SP{\bigP}$. Note that  $\SP{\bigP} \path \ident{1}{n}$.
Now, by Lemma~\ref{lem:fruClause}, since $ \SP{\bigP} \path \left\{ \SPs{\bigP}{\sel} \mid \sel \text{ full selection compatible with }\bigP  \right\}$, there exists a full selection $\sel$, compatible with $\bigP$, such that the path from $\SP{\bigP}$ to the identity uses \SPs{\bigP}{\sel}. 
Consider the truth assignment $x_i:=$ True $\Leftrightarrow i\in T$. Then each clause of $\phi$ contains at least one literal that is true (the literal whose index is in $\sel$), and thus $\phi$ is satisfiable.

Assume now that $\phi$ is satisfiable: consider any truth assignment making $\phi$ true, write $T$ the set of indices such that $x_i=$ True, and  $F=\intvl1l-T$. Write also $\sel$ a set containing, for each clause of $\phi$, the index of one literal being true under this assignment. Then $\sel$ is a full selection, compatible with the full assignment $\bigP=(T,F)$.
By Lemma~\ref{lem:fruVar}, there exists an efficient path from $S_\phi$ to $\SP{\bigP}$. By Lemma~\ref{lem:fruClause}, there exists an efficient path from $\SP{\bigP}$ to $\SPs{\bigP}{\sel}$. And by Lemma~\ref{lem:finish}, there exists an efficient path from  $\SPs{\bigP}{\sel}$ to the identity. Thus sequence $S_\phi$ is efficiently sortable.
\end{proof}

Using Theorem~\ref{thm:big}, we can now prove the main result of the paper.
\begin{thm}
The following problems are \textsf{NP}-hard:
\begin{itemize}
 \item Sorting By Prefix Reversals (MIN-SBPR) 
 \item deciding, given a sequence $S$, whether $S$ can be sorted in $d_b(S)$ flips
\end{itemize}
\end{thm}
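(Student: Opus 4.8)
The plan is to derive the NP-hardness of the two decision problems directly from Theorem~\ref{thm:big} via the reduction $\phi \mapsto S_\phi$ of Definition~\ref{defn:sphi}. First I would observe that the map $\phi \mapsto S_\phi$ is computable in polynomial time: $S_\phi$ is a permutation of $\intvl 1n$ with $n = 31l + 98k$, and each gadget sequence ($\literals$, $\var$, $\clause$, and the docks) is written out by an explicit formula whose size is linear in $l$ and $k$, so the whole construction takes time polynomial in the size of $\phi$. I would also note that $d_b(S_\phi)$ can be computed directly from $S_\phi$ in polynomial time simply by counting consecutive pairs that fail to be consecutive integers.

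Next I would recall the link between efficient sortability and the two quantities in the statement. By definition, a sequence $S$ of length $n$ is efficiently sortable exactly when it can be sorted using $d_b(S)$ flips, and since every flip changes the number of breakpoints by at most $1$ (Property after the definition of flips), $d_b(S)$ is always a lower bound on the number of flips needed to reach $\ident 1n$. Hence ``$S$ can be sorted in $d_b(S)$ flips'' and ``$S$ is efficiently sortable'' and ``$S \path \ident 1n$'' are all equivalent (the last equivalence is Property~2 in the list of basic properties). Therefore Theorem~\ref{thm:big} says precisely that $S_\phi$ can be sorted in $d_b(S_\phi)$ flips if and only if $\phi$ is satisfiable. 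This immediately gives a polynomial-time many-one reduction from 3-SAT to the problem ``given $S$, decide whether $S$ can be sorted in $d_b(S)$ flips,'' establishing its NP-hardness.

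For MIN-SBPR (the optimization problem of sorting a permutation with the fewest prefix reversals), I would argue as follows. The reduction above shows that an algorithm solving MIN-SBPR would in particular decide, on input $S_\phi$, whether the optimum equals $d_b(S_\phi)$; since $d_b(S_\phi)$ is computable in polynomial time and is always a lower bound, comparing the optimum to $d_b(S_\phi)$ decides satisfiability of $\phi$. Thus MIN-SBPR is NP-hard as well. One can phrase this as a Turing reduction from 3-SAT to MIN-SBPR, or, if a many-one reduction to a decision version is preferred, as a reduction to ``given $S$ and an integer $t$, decide whether $S$ can be sorted in at most $t$ flips'' by taking $t = d_b(S_\phi)$.

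The only real work in this final step is verifying that the reduction is genuinely polynomial-time and that the breakpoint count of $S_\phi$ is what the construction predicts; both are routine bookkeeping given the explicit gadget definitions. The substantive content — that efficient sortability of $S_\phi$ captures satisfiability of $\phi$ — is exactly Theorem~\ref{thm:big}, whose proof rests on Lemmas~\ref{lem:fruVar}, \ref{lem:fruClause} and~\ref{lem:finish} and the gadget properties; so no further obstacle arises here beyond assembling these pieces.
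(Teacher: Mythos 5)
Your argument is correct and is essentially the paper's own proof, just spelled out in more detail: both reduce from 3-SAT via $\phi \mapsto S_\phi$, invoke Theorem~\ref{thm:big} to equate satisfiability with sortability in $d_b(S_\phi)$ flips, and note that the construction is polynomial (indeed linear) time. The extra remarks about $d_b(S_\phi)$ being efficiently computable and about phrasing the MIN-SBPR reduction as Turing or many-one to a budgeted decision version are sound and only make explicit what the paper leaves implicit.
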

\begin{proof}
 By reduction from 3-SAT. Given any formula $\phi$, create $S_\phi$ (see Definition~\ref{defn:sphi}, the construction requires a linear time). By Theorem~\ref{thm:big}, the minimum number of flips necessary to sort $S_\phi$ is $d_b(S_\phi)$ iff $\phi$ is satisfiable.
\end{proof}

\section{Conclusion}

In this paper, we have shown that the Pancake Flipping problem is \textsf{NP}-hard, thus answering a long-standing open question.
We have also provided a stronger result, namely, deciding whether a permutation can be sorted with no more than one flip per breakpoint is also \textsf{NP}-hard.

Among related important problems, the last one having an open complexity is now the burnt variant of the Pancake Flipping problem.
An interesting insight into this problem is given in a recent work from Labarre and Cibulka~\cite{LC11}, where the authors characterize a subclass of permutations that can be sorted in polynomial time, using the breakpoint graph~\cite{BP93}.
Another development consists in trying to improve the approximation ratio of 2 for the Pancake Flipping problem, both in its burnt and unburnt versions. 
%Again, the breakpoint graph approach in~\cite{LC11} provides interesting bounds which could be used to improve approximation algorithms.
\newpage
\bibliographystyle{plain}
\bibliography{biblio}

\end{document}